\begin{document}
%
\conferenceinfo{EMSOFT'14}{, October 12-17, 2014, New Delhi, India}

\title{Supporting Read/Write Applications in Embedded Real-time Systems via Suspension-aware Analysis\titlenote{}}
%
%
%
%
%

\numberofauthors{2} 
%
\author{
%
%
\alignauthor
}

\maketitle
\begin{abstract}
In many embedded real-time systems, applications often interact with I/O devices via read/write operations, which may incur considerable suspension delays. Unfortunately, prior analysis methods for validating timing correctness in embedded systems become quite pessimistic when suspension delays are present. In this paper, we consider the problem of supporting two common types of I/O applications in a multiprocessor system, that is, write-only applications and read-write applications. For the write-only application model, we present a much improved analysis technique that results in only $O(m)$ suspension-related utilization loss, where $m$ is the number of processors. For the second application model, we present a flexible I/O placement strategy and a corresponding new scheduling algorithm, which can completely circumvent the negative impact due to read- and write-induced suspension delays. We illustrate the feasibility of the proposed I/O-placement-based schedule via a case study implementation. Furthermore, experiments presented herein show that the improvement with respect to system utilization over prior methods is often significant.
\end{abstract}

\category{C.3}{Computer Systems Organization}{Special-Purpose and Application-Based System\textemdash Real-Time and Embedded Systems} 
\category{D.4.7}{Operating Systems}{ Organization and Design\textemdash Real-time systems and embedded systems}

\terms{Algorithms, Design, Performance}

\keywords{I/O-intensive applications, scheduling algorithm, timing validation}

\section{Introduction}
\label{sec:intro}
Applications that incur read and/or write operations are commonly seen in embedded real-time systems. A typical data processing application may need to write data to the disk after performing computation on CPU. Such read and write operations cause non-negligible suspension delays during an application's execution, i.e., an application is suspended by the operating system while waiting for the completion of the I/O operation. For example, delays introduced by disk I/O range from 15$\mu s$ (for NAND flash) to 15$ms$ (for magnetic disks) per read~\cite{kang:disktime}.

Unfortunately, such delays cause intractability in validating applications' timing correctness, even in uniprocessor systems~\cite{ridouard:unihard}. If applications require hard real-time (HRT) constraints (i.e., meeting deadlines, which is an underlying requirement in many embedded real-time systems), then, in the worst-case, significant utilization of processors have to be sacrificed in order to provide such timing guarantee. Consider an example task system with two identical recurrent tasks $\tau_1$ and $\tau_2$ running on a uniprocessor platform. Each released job in $\tau_1$ and $\tau_2$ first spends $5 ms$ in reading data from the disk, then spends $5 ms$ in performing computation, and finally spends $5 ms$ in writing data to the disk. The relative deadline of these two tasks is set to be $15 ms$. From the earliest-deadline-first (EDF) schedule shown in Fig. \ref{fig:simpleexample}, $\tau_2$ misses its deadline while the total utilization of the system is low (i.e., each task only requires 5/15 of the processor capacity because suspensions do not occupy CPU). In this paper, we consider the problem of scheduling and analyzing HRT applications that contain I/O operations in a multiprocessor embedded real-time system. We specifically focus on two common types of such applications, i.e., write-only applications that incur only write operations, and read-write applications that incur both read and write operations.

\begin{figure}[t]
	\begin{center}
	\includegraphics[width=3in]{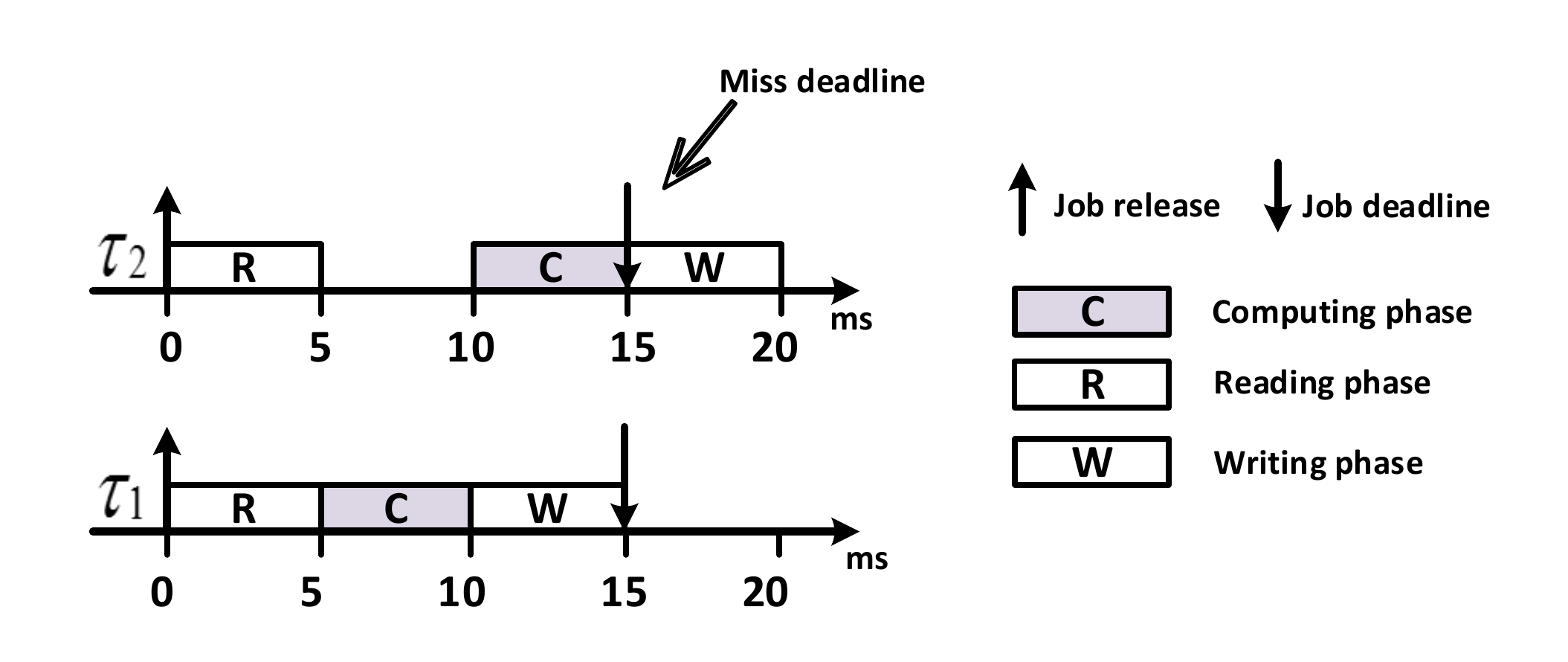} 
	\end{center} 
\vspace{-5mm}
\caption{Example task model}
\vspace{-5mm}
\label{fig:simpleexample}
\end{figure}

To deal with the read- and/or write-induced suspensions, perhaps the most commonly used approach is  \textit{suspension-oblivious} analysis, which simply treats suspension as computation by integrating suspension length into per-task worst-case computation time requirements. However, this approach yields $\Omega(n)$ suspension-related utilization loss where $n$ is the number of tasks that may suspend in the system. Significant system utilization may be sacrificed in the worst-case under this approach if the number of suspending tasks is large or suspension delays are long. The alternative is to explicitly consider suspensions in the scheduling analysis; this is known as \textit{suspension-aware} analysis. Previous research~\cite{cong:O(m)} has demonstrated the advantage of using suspension-aware analysis over suspension-oblivious analysis in many scenarios.

We thus consider in this paper designing new suspension-aware analysis techniques to improve system utilization. We focus on \textit{global-scheduling} approaches where tasks may migrate among processors (as opposed to \textit{partitioned-scheduling} where tasks are statically assigned to processors). Specifically, we study the global earliest-deadline-first(GEDF) \allowbreak scheduling algorithm herein, but our proposed techniques can also be extended to other fixed-job-priority global scheduling algorithms. We first present an improved analysis technique for write-only applications. For read-write applications, our observation is that if the time at which applications' read and write operations occur is not controllable, then utilization loss is fundamental. For example, as seen in Fig. \ref{fig:simpleexample}, regardless of how we prioritize the two tasks, one of them inevitably misses the deadline. Motivated by this observation, we design a flexible I/O placement policy, which allows the scheduler to judiciously control the time at which read and write operations occur. In this way, the negative impact due to read- and write-induced suspensions can be alleviated.

\paragraph*{\textbf{Overview}}For the soft real-time(SRT) case (i.e., only requiring bounded response times), an overview of the work in scheduling task systems with suspensions on multiprocessors can be found in \cite{cong:rtss09,liu2010improving,cong:O(m)}. But such technique cannot be applied to the analysis of the HRT case. For the HRT case, several works has been focused on periodic tasks that may suspend at most once on a uniprocessor \cite{lakshmanan2010scheduling,palencia1998schedulability,palencia2005response}. On multiprocessors, \cite{cong:hrt_sus} presents the only existing global suspension-aware analysis for sporadic HRT suspending task systems scheduled under global fixed-priority schedulers. However, the resulting schedulability tests require  pseudo-polynomial time complexity and may be pessimistic in many scenarios.

\paragraph*{\textbf{Contributions}}The existing suspension-oblivious and sus-pension-aware approaches for supporting applications containing read and write operations are pessimistic. In order to support such applications in a more efficient way, we present in this paper new suspension-aware analysis techniques for two common application models. For write-only applications, our proposed analysis techniques results in only $O(m)$ suspension-related utilization loss. To the best of our knowledge, this is the first analysis technique with a provable $O(m)$ suspension-related utilization loss for HRT suspending task systems. For applications with both read and write operations, we design a controllable I/O placement policy and a corresponding global EDF-based scheduling algorithm. We prove that the proposed I/O-placement-based scheduling technique can completely circumvent the negative impact due to read- and write-induced suspension. The feasibility of implementing this I/O-placement-based schedule in practice is demonstrated via a case study. As demonstrated by experiments, our proposed techniques significantly improve upon prior methods with respect to system utilization.

\paragraph*{\textbf{Organization}}The rest of the paper is organized as follows. We define the system model in Sec. \ref{sec:sysmodel}. In Sec. \ref{sec:wo}, we present our analysis for write-only applications. In Sec. \ref{sec:rw}, we present our I/O placement policy for read-write applications and the corresponding scheduling algorithm and schedulability test. In Sec. \ref{sec:case}, we provide a case study to show the improvement with respect to reducing response time and the feasibility of our I/O- placement-based schedule. In Sec. \ref{sec:woexp}, we experimentally evaluate the proposed schedulability test. Sec. \ref{sec:con} concludes.

\vspace{3mm}

\begin{figure}[t]
	\begin{center}
	\includegraphics[width=3in]{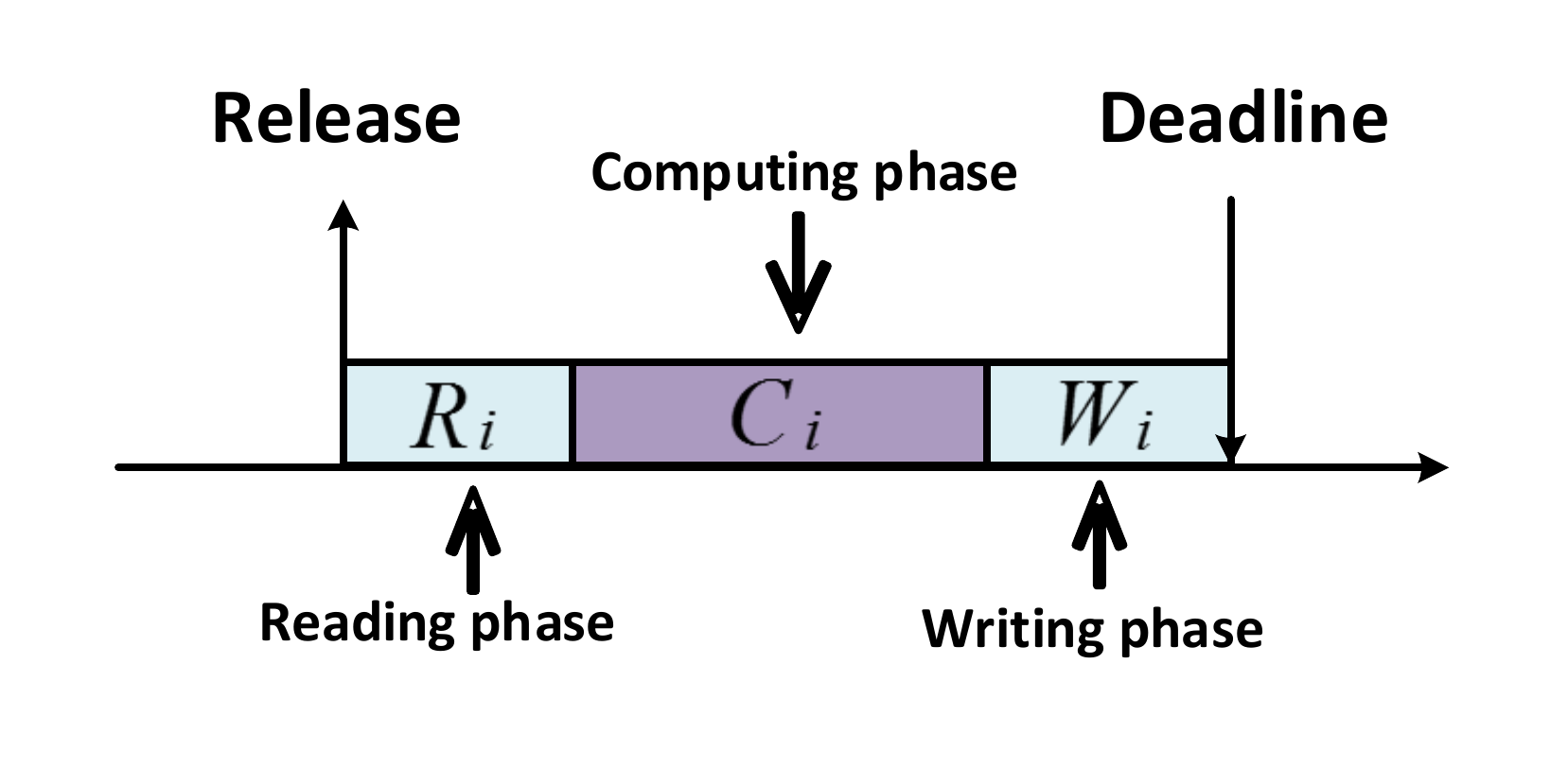} 
	\end{center} 
\vspace{-5mm}
\caption{Example read-write task pattern}
\vspace{-5mm}
\label{fig:rwpattern}
\end{figure}

\section{System Model}
\label{sec:sysmodel}

In this section, we formally define the system model. We first present the general task models for applications with read/write operations. Then we specifically define task models for read-write applications and write-only applications.

\paragraph*{\textbf{General task model}} An embedded real-time systems can be represented as a number of sporadic tasks that are invoked recurrently. Each invocation of a task is called a job and there is a minimum time between two consecutive job releases of a task. For each task, the time taken by read/write operations is present as suspensions on processors. Thus, an application with read/write operations is generally modeled as a suspending sporadic task. Let $\tau=\lbrace \tau_1,...,\tau_n \rbrace$ denote the task system that contains the set of $n$ independent suspending sporadic tasks. Let $T_i$ be the period of task $\tau_i$ where $T_i$ is the minimum time between two consecutive job releases of $\tau_i$. For each task $\tau_i$, let $C_i$ and $S_i$ denote the worst-case computation time and worst-case suspension time, respectively. Define the utilization $U_i$ of task $\tau_i$ as the ratio of computation time $C_i$ to its period $T_i$ ( i.e., $U_i=\frac{C_i}{T_i}$) and the suspension ratio $V_i$ of task $\tau_i$ as the ratio of suspension time $S_i$ to its period $T_i$ ( i.e., $V_i=\frac{S_i}{T_i}$). We require 
\begin{equation}
\label{eq:req1_2}
U_i+V_i\leq 1, 
\end{equation}
for otherwise, task $\tau_i$ must miss its deadline in the worst case. Let $U_{sum}= \sum_{i=1}^{n}U_{i}$ and $V_{sum}= \sum_{i=1}^{n}V_i$, where $U_{sum}$ is the system utilization.

Let $\tau_{i,j}$ be the $j$th job released by task $\tau_i$; let $r_{i,j}$ and $d_{i,j}$ be the corresponding release time and deadline. We consider the implicit-deadline task systems where $d_{i,j}-r_{i,j}=T_i$. 

Different kinds of applications may have varied operation interleaving patterns. Next, we specifically define the task models for read-write applications and write-only applications according to their operation interleaving patterns.

\paragraph*{\textbf{Read-write task model}}The most common operation interleaving pattern is to first read data from I/O devices, then perform computation based upon the data, and finally write the result back to I/O devices. As shown in Fig. \ref{fig:rwpattern}, each read-write task has three phases, a reading phase, a computing phase, and a writing phase where the reading and writing phases are modeled as suspensions. For each read-write task $\tau_i$, let $R_i$, $C_i$ and $W_i$ denote the total length of its reading phase, computing phase, and writing phase, respectively. Then each read-write task $\tau_i$ could be represented as $\tau_i=(R_i,C_i,W_i,T_i)$. For a read-write task $\tau_i$, we have $U_i= \frac{C_i}{T_i}$ and $V_i= \frac{R_i+W_i}{T_i}$.

\paragraph*{\textbf{Write-only task model}} The write-only task model is used to represent write-only applications. As shown in Fig. \ref{fig:wopattern}, each write-only task has three phases, where the first phase and the last phase are computing phases, and the second phase is a writing phase. For each write-only task $\tau_i$, let $C_{i,1}$, $W_1$ and $C_{i,2}$ denote the length of its first computing phase, writing phase, and the second computing phase, respectively. Each write-only task $\tau_i$ can thus be represented as $\tau_i=(C_{i,1}, W_i, C_{i,2}, T_{i})$. Similarly, for a write-only task $\tau_i$, we have $U_i=\frac{C_{i,1}+C_{i,2}}{T_i}$ and $V_i= \frac{W_i}{T_i}$. Let $\delta_i = \frac{W_i}{C_{i,1}}$ denote the ratio of the length of the writing phase to the length of the first computing phase. We will use $\delta_i$ later in the analysis given in Sec. \ref{subsec:rwanalysis}.

\paragraph*{\textbf{System model}}We assume that the platform is comprised of $m$ identical processors. We consider discrete time system. Interval with unit length is called unit interval.
\newdef{definition}{\textbf{Definition}}
\begin{definition}
\label{def:busy}
A unit interval $[t,t+1)$ is \textit{busy} (resp. \textit{non-busy}) for a job set $J$ if all $m$ processors execute jobs in $J$ during $[t,t+1)$. A time interval $[a,b)$ is \textit{busy} (resp. \textit{non-busy}) for a job set $J$ if each (resp. not all) unit interval within $[a,b)$ is busy for $J$. For conciseness, if we say an interval is busy without referring any job set, we mean it is busy for the set of all jobs in the task system. 
\end{definition}

\begin{definition}
\label{def:pending}
If job $\tau_{i,j}$ has been released but has not finished its last phase at time instant $t$, we say $\tau_{i,j}$ is \textit{pending} at $t$. If job $\tau_{i,j}$ has been released but has not finished its all computation at time instant $t$, we say $\tau_{i,j}$ is \textit{comp-pending} at $t$. If job $\tau_{i,j}$ has been released but has not finished its all suspension at time instant $t$, we say $\tau_{i,j}$ is \textit{sus-pending} at $t$. 
\end{definition}

\begin{definition}
\label{def:compavailable}
At time instant $t$, if job $\tau_{i,j}$ is comp-pending and it is able to perform computation at $t$, we say $\tau_{i,j}$ is \textit{comp-available} at $t$. Note that if a job is suspended by read/write operations, it cannot perform computation even if it is comp-pending. 
\end{definition}

\begin{definition}
\label{def:prempted}
For a unit interval $[t,t+1)$, if a job $\tau_{i,j}$ is comp-available at $t$ but does not compute in $[t,t+1)$, we say $\tau_{i,j}$ is comp-preempted in $[t,t+1)$.
\end{definition}

\paragraph*{\textbf{Scheduling algorithm}}In this paper, we focus on global earliest-deadline-first(GEDF) scheduling algorithm defined as follows.

\begin{definition}
\label{def:GEDF}
At each time instant, GEDF selects $m$ comp-available jobs with the earliest deadlines for computation. Ties are broken by index where tasks with lower indexes are favored. Jobs are allowed to migrate among different processors.
\end{definition}

\begin{figure}[t]
	\begin{center}
	\includegraphics[width=3in]{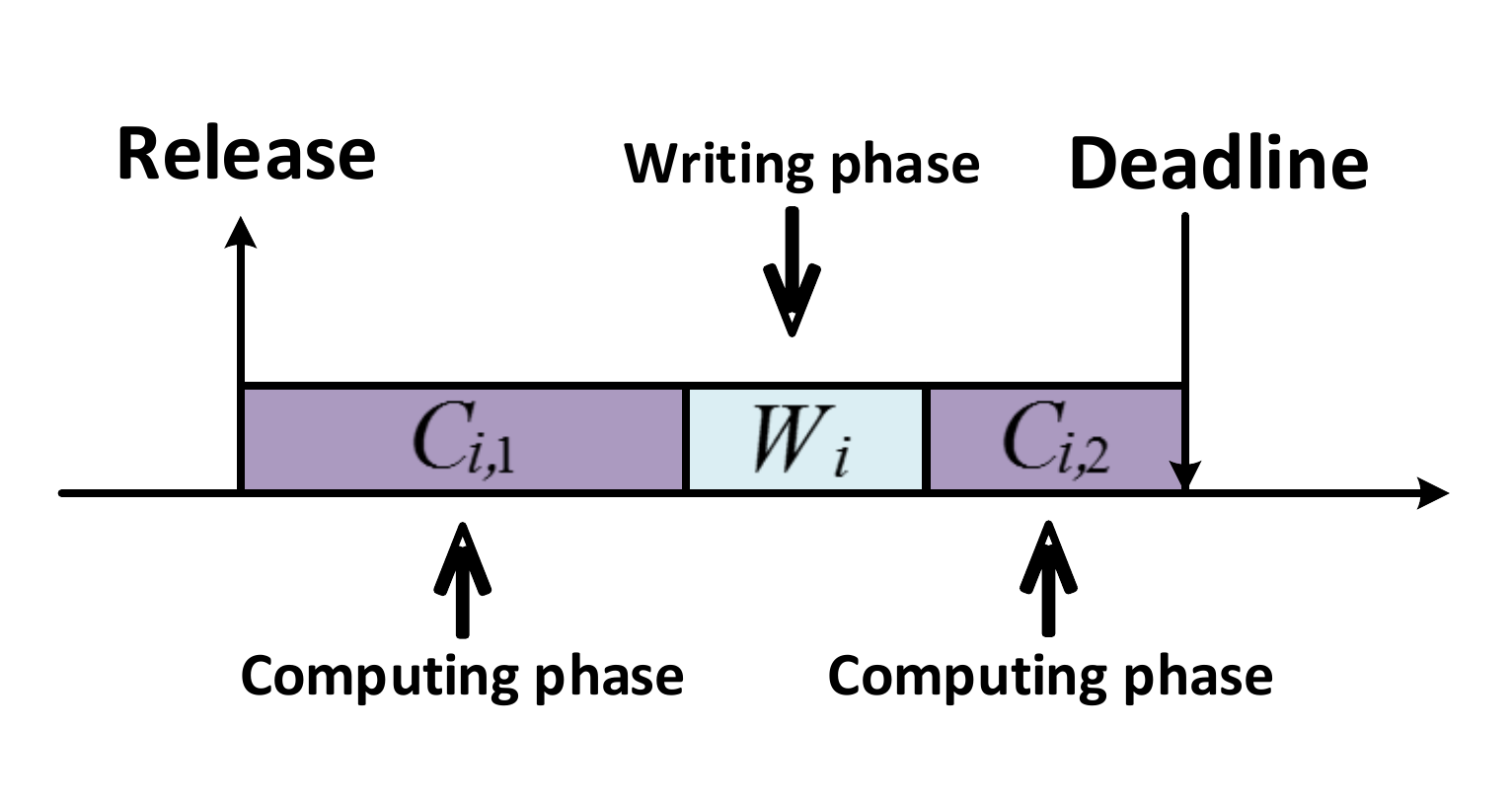} 
	\end{center} 
\vspace{-5mm}
\caption{Example write-only task pattern}
\vspace{-5mm}
\label{fig:wopattern}
\end{figure}

\section{Supporting Write-only Applications}
\label{sec:wo}

In this section, we investigate the GEDF schedule for HRT write-only task systems. Our analysis draws the inspiration from the lag-based analysis technique presented in the seminal work of Devi \cite{devi2006soft}. Lag-based technique was originally designed to handle SRT task systems. It has been extensively applied to analyze different SRT suspending task systems in prior work \cite{cong:O(m)}. Based upon this technique, we develop a new suspension-aware analysis for HRT suspending task systems, which is the first of its kind to the best of our knowledge. 

We will first introduce the lag-based analysis technique, and then present our suspension-aware analysis and the resulting schedulability test. 

\subsection{Lag-based Analysis Technique}
\label{subsec:lag}

For any given write-only task system $\tau$, a \textit{processor share} (\textit{PS}) schedule is an ideal schedule for computation where each task $\tau_i$ performs computation with a speed equal to $U_i$ when it is comp-pending (which ensures that each of its jobs completes its computation exactly at its deadline). Note that suspensions are not considered in the PS schedule and a task could execute the second computing phase as long as it has finished the first computing phase. A valid PS schedule exists for $\tau$ if $U_{sum} \leq m$ holds. 

Fig. \ref{fig:PSschedule} illustrates the PS schedule of the task system in Fig. \ref{fig:simpleexample}. Each of the two tasks has a utilization equal to $1/3$ and thus shares $1/3$ of the processor capacity. Every job of the two tasks finishes its computation at its deadline and suspensions are not considered. We can see that the PS schedule is not a real schedule and only used to keep track of the computation for analysis purposes. 

\begin{figure}[th]
	\begin{center}
	\includegraphics[width=3in]{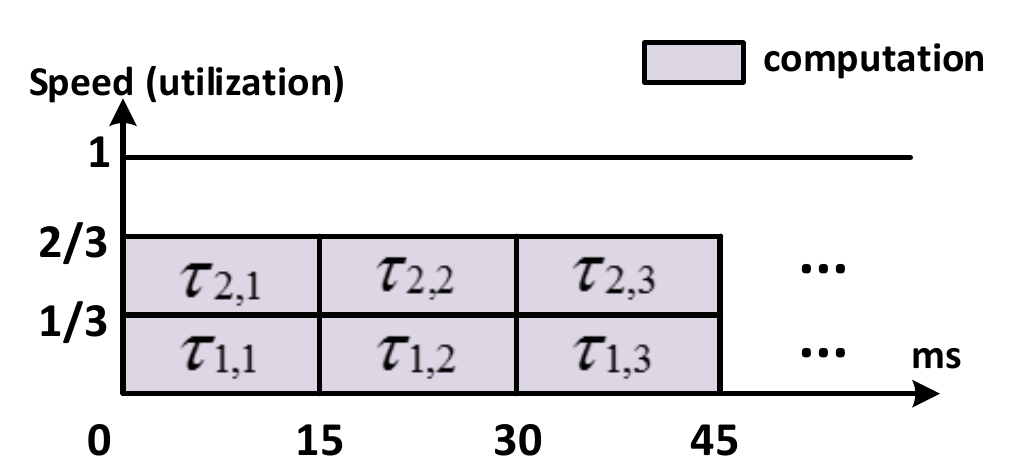} 
	\end{center} 
\vspace{-5mm}
\caption{Example PS schedule}
\vspace{-3mm}
\label{fig:PSschedule}
\end{figure}

In the lag-based analysis, schedulability tests are obtained by comparing the computation performed by tasks in the GEDF schedule $S$ and the PS schedule \textit{PS}. Let $A(\tau_{i,j}, t_1, t_2, S)$ denote the total computation performed by job $\tau_{i,j}$ under GEDF in $[t_1, t_2)$. Then, the total computation performed by each task $\tau_i$ and all tasks in $\tau$ in $[t_1, t_2)$ under GEDF is given by
\begin{eqnarray*}
A(\tau_i, t_1, t_2, S)=\sum_{j \geq 1} A(\tau_{i,j}, t_1, t_2, S)
\end{eqnarray*}
and
\begin{eqnarray*}
A(\tau, t_1, t_2, S)=\sum_{i = 1}^{n} A(\tau_i, t_1, t_2, S).
\end{eqnarray*}

$A(\tau_i, t_1, t_2, PS)$ and $A(\tau, t_1, t_2, PS)$ can be defined in a similar manner corresponding to $PS$ schedule. 

The difference between the computation performed by a job $\tau_{i,j}$ up to time $t$ in \textit{PS} and $S$, denoted \textit{the lag of job $\tau_{i,j}$ at time $t$}, is defined by
\begin{equation*}
lag(\tau_{i,j}, t, S) = A(\tau_{i,j}, 0, t, PS) - A(\tau_{i,j}, 0, t, S).
\end{equation*}
Similarly, the difference between the computation performed by a task $\tau_{i}$ up to time $t$ in  \textit{PS} and $S$, denoted \textit{the lag of task $\tau_{i}$ at time $t$}, is defined by
\begin{eqnarray}
\label{eq:lag for task}
lag(\tau_i, t, S) \hspace{-1.5mm} & = & \hspace{-2mm} \sum_{j \geq 1} lag(\tau_{i,j}, t, S) \nonumber \\
                \hspace{-1.5mm} & = & \hspace{-1.5mm} \sum_{j \geq 1} \big(A(\tau_{i,j}, 0, t, PS) - A(\tau_{i,j}, 0, t, S)\big)\hspace{-0.5mm}.
\end{eqnarray}
\normalsize

The \textit{LAG} for the task system $\tau$ at time $t$ is defined as
\begin{equation}
\label{eq:LAG for task set_1}
LAG(\tau, t, S) = \sum_{i=1}^{n} lag(\tau_i, t, S).
\end{equation}
Also, $LAG(\tau, t, S)$ can be represented as follow.
\begin{equation}
\label{eq:LAG for task set_2}
LAG(\tau, t, S) = A(\tau, 0, t, PS)- A(\tau, 0, t, S).
\end{equation}
\newtheorem{lemma}{\textbf{Lemma}}
\begin{lemma}
\label{lem:nonincrease}
If $[t_1,t_2)$ is a busy interval, then,
\begin{eqnarray*}
LAG(\tau, t_2, S) \leq LAG(\tau, t_1, S). 
\end{eqnarray*}
\end{lemma}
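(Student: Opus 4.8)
The plan is to work with the alternative characterization of \textit{LAG} in \eqref{eq:LAG for task set_2} rather than the per-task sum. Subtracting the value at $t_1$ from the value at $t_2$ and telescoping the ``work done up to time $t$'' quantities, I would first write
\begin{eqnarray*}
LAG(\tau, t_2, S) - LAG(\tau, t_1, S) &=& \big(A(\tau, 0, t_2, PS) - A(\tau, 0, t_1, PS)\big) \\
 && {}- \big(A(\tau, 0, t_2, S) - A(\tau, 0, t_1, S)\big)\\
 &=& A(\tau, t_1, t_2, PS) - A(\tau, t_1, t_2, S).
\end{eqnarray*}
So the claim reduces to showing that, over a busy interval, the PS schedule performs no more total computation than the GEDF schedule $S$.

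Next I would bound the two terms separately. For the PS schedule, by the definition of the processor-share schedule each task $\tau_i$ computes at rate at most $U_i$ at every instant (rate $U_i$ while comp-pending, rate $0$ otherwise), so the instantaneous aggregate computation rate is at most $\sum_{i=1}^n U_i = U_{sum}$. Integrating (or, in the discrete-time setting, summing over the unit intervals composing $[t_1,t_2)$) gives $A(\tau, t_1, t_2, PS) \le U_{sum}\,(t_2 - t_1)$. Since a valid PS schedule requires $U_{sum} \le m$, this is at most $m\,(t_2 - t_1)$. For the GEDF schedule, because $[t_1,t_2)$ is busy, Definition~\ref{def:busy} tells us every unit interval inside $[t_1,t_2)$ has all $m$ processors executing jobs of $\tau$, hence exactly $m$ units of computation are performed per unit interval and $A(\tau, t_1, t_2, S) = m\,(t_2 - t_1)$.

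Combining the two estimates yields
\begin{eqnarray*}
LAG(\tau, t_2, S) - LAG(\tau, t_1, S) = A(\tau, t_1, t_2, PS) - A(\tau, t_1, t_2, S) \le (U_{sum} - m)(t_2 - t_1) \le 0,
\end{eqnarray*}
which is exactly the desired inequality. The argument is essentially bookkeeping; the only point that needs a little care is justifying the bound on the PS-schedule work rate (that being comp-pending, rather than merely pending or sus-pending, is what matters, since suspensions carry no PS execution) and being explicit that ``busy'' forces exactly $m$ units of real computation per unit interval. I do not anticipate a genuine obstacle beyond stating these two facts cleanly; the rest follows by the telescoping identity above.
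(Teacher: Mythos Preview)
Your proposal is correct and follows essentially the same route as the paper: telescope \eqref{eq:LAG for task set_2} to get $A(\tau,t_1,t_2,PS)-A(\tau,t_1,t_2,S)$, bound the PS term by $U_{sum}(t_2-t_1)$ and the GEDF term by $m(t_2-t_1)$ using the busy assumption, and conclude from $U_{sum}\le m$. If anything, your version is slightly more careful in writing $A(\tau,t_1,t_2,PS)\le U_{sum}(t_2-t_1)$ as an inequality rather than an equality.
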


\begin{proof}
By Eq. (\ref{eq:LAG for task set_2}), 
\begin{eqnarray*}
&& LAG(\tau, t_2, S) - LAG(\tau, t_1, S)\\ \hspace{-1.5mm} 
& = & A(\tau, t_1, t_2, PS)- A(\tau, t_1, t_2, S)\\
& = & U_{sum} \cdot (t_2- t_1) - m \cdot (t_2-t_1)\\
& = & (U_{sum} - m) \cdot (t_2-t_1)\\
& \leq  & 0.
\end{eqnarray*}

Lemma \ref{lem:nonincrease} implies $LAG(\tau, t, S)$ cannot increase during a busy interval. 
\end{proof}

\subsection{Lag-based Analysis for HRT Write-only Task Systems}
\label{subsec:wo_analysis}

Now we analyze the schedulability for $n$ sporadic write-only tasks scheduled on $m$ processors under GEDF.

\begin{lemma}
\label{lem:lem2}
Consider job $\tau_{i,j}$ and a time instant $t > r_{i,j}$. Let $S_i^{*}$ denote the suspension of $\tau_{i,j}$ finished by $t$. Let $C_i^{*}$ denote the computation of $\tau_{i,j}$ performed by time instant $t$. Then, 
\begin{equation}
\label{eq:lem2}
\frac{S_i^{*}}{C_i^{*}} \leq \delta_i,                  
\end{equation}
where $\delta_i = W_i / C_{i,1}$, as defined in Sec. \ref{sec:sysmodel}.
\end{lemma}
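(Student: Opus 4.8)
The plan is to exploit the fixed phase structure of a write-only job. Recall that $\tau_{i,j} = (C_{i,1}, W_i, C_{i,2}, T_i)$ executes its first computing phase of length $C_{i,1}$, then suspends for its writing phase of length $W_i$, and finally runs its second computing phase of length $C_{i,2}$. The crucial structural fact is that the writing phase cannot begin until the first computing phase has fully completed, so any progress on suspension is predicated on having already completed $C_{i,1}$ units of computation.

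First I would observe the trivial global bound $S_i^{*} \le W_i$, which always holds since $S_i^{*}$ is the amount of the writing phase finished by $t$. Then I would split into two cases depending on whether $\tau_{i,j}$ has performed any suspension by $t$. If $S_i^{*} = 0$, the claim reduces to $0 \le \delta_i \cdot C_i^{*}$, which holds since $\delta_i \ge 0$ and $C_i^{*} \ge 0$ (reading Eq. (\ref{eq:lem2}) in the cross-multiplied form $S_i^{*} \le \delta_i\, C_i^{*}$ also disposes of the degenerate subcase $C_i^{*}=0$).

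If $S_i^{*} > 0$, then by the phase ordering $\tau_{i,j}$ must already have completed its entire first computing phase by $t$, i.e. $C_i^{*} \ge C_{i,1}$. Combining this with $S_i^{*} \le W_i$ yields
\begin{equation*}
\frac{S_i^{*}}{C_i^{*}} \le \frac{W_i}{C_{i,1}} = \delta_i,
\end{equation*}
which is exactly Eq. (\ref{eq:lem2}).

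I do not expect a genuine obstacle here: the argument is a short case analysis resting entirely on the observation that the writing phase is sandwiched after $C_{i,1}$, so "having suspended at all" already guarantees $C_i^{*} \ge C_{i,1}$. The only point needing a word of care is the edge case $C_i^{*} = 0$ (which forces $S_i^{*} = 0$), best handled by phrasing the inequality multiplicatively rather than as a literal quotient.
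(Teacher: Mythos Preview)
Your proposal is correct and follows essentially the same approach as the paper: both arguments rest on the phase ordering of a write-only job (suspension cannot start until the first computing phase completes) together with the trivial bound $S_i^{*}\le W_i$. The only cosmetic difference is that the paper splits into three cases according to which phase $\tau_{i,j}$ is in at time $t$, whereas you merge the last two into a single case $S_i^{*}>0$; your handling of the $C_i^{*}=0$ edge case via the multiplicative form is also a bit more careful than the paper's.
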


\begin{proof}

\textbf{Case 1.} If $\tau_{i,j}$ has not finished its first computing phase at $t$, then $S_i^{*}=0$. Eq.  (\ref{eq:lem2}) clearly holds. 

\textbf{Case 2.}If $\tau_{i,j}$ has finished its first computing phase but has not finished its writing phase at $t$, then $S_i^{*} \leq  W_i$ and $C_i^{*}= C_{i,1}$. Thus,
\begin{equation*}
\frac{S_i^{*}}{C_i^{*}} =  \frac{S_i^{*}}{C_{i,1}} \leq \frac{W_i}{C_{i,1}} = \delta_i.                    
\end{equation*}

\textbf{Case 3.} If $\tau_{i,j}$ has finished its writing phase at $t$, then $C_i^{*} \geq  C_{i,1}$ and $S_i^{*} = W_i$. Thus,
 
\begin{equation*}
\frac{S_i^{*}}{C_i^{*}} =  \frac{W_i}{C_i^{*}} \leq \frac{W_i}{C_{i,1}} = \delta_i.                   
\end{equation*}

Lemma \ref{lem:lem2} proved.
\end{proof}

\begin{lemma}
\label{lem:proof1}
Let $L_i= (m-1) \cdot U_i + m \cdot U_i \cdot \delta_i$ and $L=max\{L_1,...,L_n\}$. If
\begin{equation}
\label{eq:as_sum}
U_{sum} \leq m-L,                   
\end{equation} 
and, for every $i$,
\begin{equation}
\label{eq:req1}
U_{i}\cdot (1+\delta_i)< 1.                      
\end{equation}

then no job misses its deadline under GEDF.
\end{lemma}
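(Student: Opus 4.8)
The plan is to argue by contradiction with the lag-based machinery, extending Devi's proof to handle suspensions via Lemma~\ref{lem:lem2}. Suppose a job misses its deadline, and let $\tau_{l,j}$ be one whose deadline $t_d=d_{l,j}$ is earliest, so every job with an earlier deadline meets it. Since under GEDF a job with deadline later than $t_d$ never preempts $\tau_{l,j}$ (it is preempted in its favor), only the job set $J$ of all jobs with deadline at most $t_d$ is relevant, and I would redefine $A(\cdot),lag(\cdot),LAG(\cdot)$ over $J$, observing Lemma~\ref{lem:nonincrease} still holds with ``busy'' read as ``busy for $J$''. Two facts get things moving: $LAG(J,0,S)=0$; and $LAG(J,t_d,S)>0$, because by $t_d$ the job $\tau_{l,j}$ has performed strictly less than $C_{l,1}+C_{l,2}$ units of computation while its PS share is fully served, and every other job of $J$ has completed all of its $C_i$ units in PS by its deadline $\le t_d$, which is at least what it has completed in $S$, so its lag at $t_d$ is nonnegative.

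Next I would isolate the critical busy interval. Let $t^{*}$ be the smallest time with $[t^{*},t_d)$ busy for $J$. If $t^{*}=0$, Lemma~\ref{lem:nonincrease} gives $LAG(J,t_d,S)\le LAG(J,0,S)=0$, a contradiction; so $t^{*}>0$, the slot $[t^{*}-1,t^{*})$ is non-busy for $J$, and Lemma~\ref{lem:nonincrease} gives $LAG(J,t_d,S)\le LAG(J,t^{*},S)$. The heart of the proof is to bound $LAG(J,t^{*},S)$. At $t^{*}-1$ a processor sits idle of $J$-jobs, so every comp-available $J$-job is executing; hence the $J$-jobs carrying positive lag split into at most $m-1$ executing jobs plus some comp-pending jobs that are all suspended (in their writing phase). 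For an executing job I would bound its lag by Devi-style reasoning on completed work versus PS share, contributing the $(m-1)U_i$ part of each $L_i$; for a suspended job $\tau_{i,k}$, writing $lag(\tau_{i,k},t^{*},S)\le U_i\cdot(t^{*}-r_{i,k})-C_i^{*}$, substituting $t^{*}-r_{i,k}=C_i^{*}+S_i^{*}+(\mbox{comp-preempted time})$, and applying Lemma~\ref{lem:lem2} ($S_i^{*}\le\delta_i C_i^{*}$) turns the coefficient of $C_i^{*}$ into $U_i(1+\delta_i)-1$, which is negative by condition~(\ref{eq:req1}); this collapses that job's lag into a term governed by $U_i\delta_i$, contributing the $mU_i\delta_i$ part. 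Summing over a worst-case carrying set yields $LAG(J,t^{*},S)\le L$. Extracting these per-job pieces with exactly the multiplicities $m-1$ and $m$ is, I expect, the main obstacle.

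Finally I would close the loop: over the busy-for-$J$ interval $[t^{*},t_d)$ the PS schedule adds at most $U_{sum}(t_d-t^{*})$ to $LAG$ while $S$ drains exactly $m(t_d-t^{*})$, so $LAG(J,t_d,S)\le LAG(J,t^{*},S)-(m-U_{sum})(t_d-t^{*})\le L-L(t_d-t^{*})$ by condition~(\ref{eq:as_sum}); for $t_d-t^{*}\ge1$ this is $\le0$, contradicting $LAG(J,t_d,S)>0$. The remaining case $t_d-t^{*}<1$ — the slot $[t_d-1,t_d)$ itself non-busy for $J$, so $\tau_{l,j}$ is executing with more than one unit of computation left at $t_d-1$, or suspended there — I would handle by stepping back to the preceding busy period and reapplying the same $LAG$ bound, or by folding the residual computation of $\tau_{l,j}$ into the lower bound on $LAG$ at $t^{*}$. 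Apart from that edge case, once the estimate $LAG(J,t^{*},S)\le L$ is in hand the rest is routine bookkeeping.
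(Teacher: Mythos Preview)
Your proposal follows the Devi-style template: take $t^{*}$ to be the start of the maximal busy interval ending at $t_d$, bound $LAG$ at $t^{*}$ by summing the lags of the jobs that can carry positive lag there, then drain $LAG$ over the busy interval. The paper does something structurally different: it defines $t^{*}$ as the \emph{earliest} instant at which $LAG$ becomes positive, picks any task $\tau_k$ with $lag(\tau_k,t^{*},S)>0$, and analyzes the single window $[r_{k,l},t^{*})$ for its pending job $\tau_{k,l}$. Because $LAG(\tau,r_{k,l},S)\le 0$ by minimality of $t^{*}$, the entire $LAG$ growth over $[r_{k,l},t^{*})$ is controlled by the allocation difference on that one window, and the coefficients $(m-1)$ and $m$ in $L_k=(m-1)U_k+mU_k\delta_k$ fall out of straightforward algebra on $B_1,B_2,B_3$ (suspend, compute, preempted time of $\tau_{k,l}$), not from counting how many jobs are executing versus suspended.

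This difference matters, and it is where your plan has a genuine gap. In your setup the slot $[t^{*}-1,t^{*})$ is non-busy, so at most $m-1$ comp-available $J$-jobs exist; but \emph{suspended} jobs are not comp-available, and with write-induced suspensions there can be arbitrarily many of them simultaneously carrying positive lag. Your own calculation for a suspended job gives $lag(\tau_{i,k},t^{*},S)\le C_i^{*}\bigl(U_i(1+\delta_i)-1\bigr)+U_iP_i<U_iP_i$, where $P_i$ is that job's comp-preempted time in $[r_{i,k},t^{*})$. Nothing bounds $\sum_i U_iP_i$ by a constant independent of $n$, so the step ``summing over a worst-case carrying set yields $LAG(J,t^{*},S)\le L$'' does not go through: $L=\max_i L_i$ is a single-task quantity, and there is no reason a sum over many suspended tasks should collapse to it. You flagged exactly this as ``the main obstacle,'' and it is not a bookkeeping detail; it is the point at which the Devi template breaks for suspending tasks. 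The paper's choice of $t^{*}$ sidesteps the issue entirely by never summing lags across tasks. Your closing argument also leans on $t_d-t^{*}\ge 1$ with only a sketch for the remaining case; the paper's argument has no such edge case because it never needs a busy interval of positive length after $t^{*}$.
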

\begin{proof}
We prove this lemma by contradiction. Assume job $\tau_{i,j}$ is the first job that misses its deadline at $d_{i,j}$. If more than one job misses deadline at $d_{i,j}$, we choose the one with the highest priority. Since jobs with priorities lower than that of $\tau_{i,j}$ do not impact the scheduling of $\tau_{i,j}$, we get rid of such jobs from our task system.

Because $\tau_{i,j}$ has not finished its last phase at $d_{i,j}$, by the definition of $lag(\tau_i, d_{i,j}, S)$, $lag(\tau_i, d_{i,j}, S)> 0$. For every $k\neq i$, $lag(\tau_k, d_{i,j}, S) = 0$ because $\tau_{i,j}$ is the first job that misses its deadline. Thus,
\begin{eqnarray*}
LAG(\tau, d_{i,j}, S) &=& \sum_{k=1}^{n} lag(\tau_k, d_{i,j}, S) \\
&=& lag(\tau_i, d_{i,j}, S)\\
&>& 0.
\end{eqnarray*}

From time instant $0$, let $t^{*}$ be the earliest time instant such that
\begin{equation}
\label{eq:as_lag1}
LAG(\tau, t^{*}, S) > 0.                   
\end{equation}
Since $LAG(\tau, 0,S) = 0$ and $LAG(\tau, d_{i,j}, S) > 0$, $t^{*}$ is well defined and $d_{i,j} \geq t^{*} > 0$. By the definition of $LAG(\tau, t^{*}, S)$, there exists a task $\tau_k$ at $t^{*}$ such that $lag(\tau_k, t^{*}, S) >0$, which implies $\tau_k$ must have at least one pending job. Because  $\tau_{i,j}$ is the first job that misses deadline, $\tau_k$ has only one job pending at $t^{*}$. Let $\tau_{k,l}$ be this pending job of $\tau_k$ and $\tau_{k,l}$ is released at $r_{k,l}$. Because jobs of $\tau_k$ released before $\tau_{k,l}$ have finished all their phases, we have
\begin{eqnarray}
\label{eq:as_lag2}
&& lag(\tau_k, t^{*}, S) = lag(\tau_{k,l}, t^{*}, S)  > 0. 
\end{eqnarray}

There are three kinds of unit intervals to be considered in $[r_{k,l}, t^{*})$  as shown in Fig. \ref{fig:lemma3}: (1) $\tau_{k,l}$ suspends in it; (2) $\tau_{k,l}$ computes in it; (3) $\tau_{k,l}$ does not compute or suspend in it. Let $\beta_1$, $\beta_2$ and $\beta_3$ denote the set of each kind of unit intervals, respectively. Thus $\beta_1\cup \beta_2 \cup \beta_3= [r_{k,l}, t^{*})$ and they are pairwise disjoint. Let $B_1$, $B_2$ and $B_3$ denote the lengths of each set, respectively. Note that unit intervals in $\beta_3$ must be busy. Depending on the lengths of these sets, we have the following cases to consider.

\begin{figure}[th]
	\begin{center}
	\includegraphics[width=3.5in]{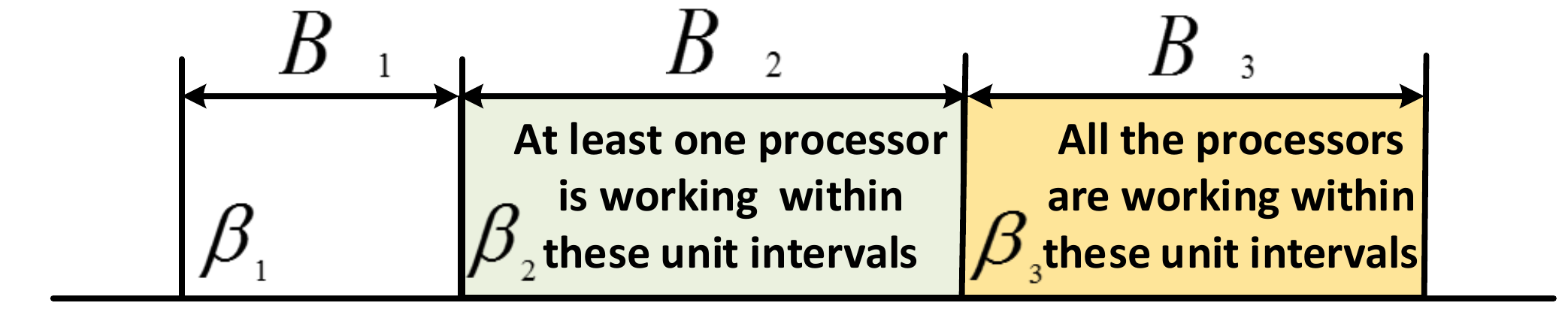} 
	\end{center} 
\vspace{-5mm}
\caption{Three sets of unit intervals in $[r_{k,l}, t^{*})$}
\vspace{-3mm}
\label{fig:lemma3}
\end{figure}

\paragraph*{\textbf{case A}} First, we discuss the cases when $B_1=0$, which implies $\tau_{k,l}$ does not suspend in $[r_{k,l}, t^{*})$.

\textbf{case A.1 : $B_1= B_2 = B_3= 0$.} In this case, $t^*= r_{k,l}$ and $lag(\tau_{k,l}, t^{*}, S)= 0$, which violates Eq.  (\ref{eq:as_lag2}).

\textbf{case A.2 : $B_1=0, B_2=0, B_3>0$.} In this case, $[r_{k,l}, t^{*})$ is a busy interval. By Lemma \ref{lem:nonincrease},
\begin{eqnarray*} 
LAG(\tau, t^{*}, S) \leq LAG(\tau, r_{k,l}, S) \leq 0, 
\end{eqnarray*}
which violates Eq.  (\ref{eq:as_lag1}). 

\textbf{case A.3 : $B_1=0, B_2>0, B_3=0$.} In this case, $\tau_{k,l}$ is executing during the time interval. So $lag(\tau_{k,l}, t^{*}, S) = U_k \cdot B_2-B_2 \leq 0$, which violates Eq. (\ref{eq:as_lag2}). 

\textbf{case A.4 : $B_1=0, B_2>0, B_3>0$.} In this case, first we consider $lag(\tau_{k,l}, t^{*}, S)$. By the definitions of $A(\tau_{k,l}, t_1, t_2, PS)$ and $A(\tau_{k,l}, t_1, t_2, S)$, we have
\begin{eqnarray}
\label{eq:c4_1_2}
A(\tau_{k,l}, r_{k,l}, t^{*}, PS) = (B_2+B_3) \cdot u_k
\end{eqnarray}
and $A(\tau_{k,l}, r_{k,l}, t^{*}, S)= B_2$ . By Eq. (\ref{eq:as_lag2}),
\begin{eqnarray*}
\label{eq:c4_2_2}
0 &<& lag(\tau_{k,l}, t^{*}, S)  \nonumber\\
&=& A(\tau_{k,l}, r_{k,l}, t^{*}, PS) - A(\tau_{k,l},  r_{k,l}, t^{*}, S)  \nonumber\\
&\leq& (B_2+B_3) \cdot U_k - B_2   \nonumber \\ 
&=& B_2 \cdot (U_k -1) + B_3 \cdot U_k,
\end{eqnarray*} 
which implies
\begin{equation}
\label{eq:c4_3_2}
B_2 < \frac{B_3 \cdot U_k}{1- U_k }.
\end{equation}
Now we consider $LAG(\tau, t^{*}, S)$. Because $t^{*}$ is the earliest time instant that $LAG(\tau, t^{*}, S)> 0$ and $r_{k,l} < t^{*}$, we have $LAG(\tau, r_{k,l}, S) \leq 0$. By the definition of LAG, 
\begin{eqnarray*}
& & LAG(\tau, t^{*}, S) \\
&=& LAG(\tau, r_{k,l}, S)+ A(\tau, r_{k,l}, t^{*}, PS) - A(\tau, r_{k,l}, t^{*}, S). 
\end{eqnarray*}
Thus, 
\begin{eqnarray}
\label{eq:c4_4_2}
LAG(\tau, t^*, S) \leq A(\tau, r_{k,l}, t^{*}, PS) - A(\tau, r_{k,l}, t^{*}, S).
\end{eqnarray}
Also we have
\begin{eqnarray}
\label{eq:c4_5_2}
A(\tau, r_{k,l}, t^{*}, PS) = U_{sum} \cdot (B_2+B_3)
\end{eqnarray} 
and because $\beta_3$ is a busy interval and $\tau_{k,l}$ is executing during $\beta_2$,
\begin{eqnarray}
\label{eq:c4_6_2}
A(\tau, r_{k,l}, t^{*}, S)  \geq m \cdot B_3+ B_2.
\end{eqnarray}
Therefore, by Eq. (\ref{eq:as_lag1}),
\begin{eqnarray*}
\label{eq:c4_7_2}
0 &<& LAG(\tau, t^{*}, S) \nonumber\\
&&{\lbrace \rm{by}~(\ref{eq:c4_4_2}) \rbrace} \nonumber\\ 
&\leq& A(\tau, r_{k,l}, t^{*}, PS) - A(\tau, r_{k,l}, t^{*}, S)  \nonumber\\
&&{\lbrace \rm{by}~(\ref{eq:c4_5_2}) \rbrace} \nonumber\\ 
&=& U_{sum} \cdot (B_2+B_3)- A(\tau, r_{k,l}, t^{*}, S) \nonumber \\ 
&&{\lbrace \rm{by}~(\ref{eq:c4_6_2}) \rbrace} \nonumber\\
&\leq& U_{sum} \cdot (B_2+B_3)- (m \cdot B_3+ B_2) \nonumber \\  
&=& (U_{sum}-1)\cdot B_2+ (U_{sum}-m) \cdot B_3\nonumber \\
&&{\lbrace \rm{by}~(\ref{eq:c4_3_2}) \rbrace} \nonumber\\ 
&<& \frac{(U_{sum}-1) \cdot B_3 \cdot U_k}{1- U_k }  + (U_{sum}-m) \cdot B_3
\end{eqnarray*}
By rearrangements, we have 
\begin{eqnarray*}
\label{eq:c4_8_2}
U_{sum}&>& m-(m-1)\cdot U_k \\
&>& m-(m-1)\cdot U_{max} \\
&>& m-L
\end{eqnarray*}
However, this violates Eq. (\ref{eq:as_sum}).

\paragraph*{\textbf{case B}} Secondly, we discuss the cases when $B_1>0$.

\textbf{case B.1 : $B_1>0 , B_2 = 0, B_3 \geq 0$.} In this case, $\tau_{k,l}$ writes data before doing computation, which violates the phase interleaving pattern of write-only tasks according to the write-only task model.

\textbf{case B.2 : $B_1 > 0 , B_2 > 0, B_3 = 0$.} In this case, 
\begin{eqnarray*}
&& A(\tau_{k,l}, r_{k,l}, t^{*}, PS)  \\
&=& (B_1+B_2) \cdot U_k  \\
&&{\lbrace \rm{by}~Lemma \ref{lem:lem2} \rbrace}\\ 
&{\leq}& (B_2 \cdot \delta_k+B_2) \cdot U_k \\
&=& B_2 \cdot (\delta_k+1) \cdot U_k  \\
&&{\lbrace \rm{by}~(\ref{eq:req1}) \rbrace} \\
&{<}& B_2, 
\end{eqnarray*}
and $A(\tau_{k,l}, r_{k,l}, t^{*}, S)= B_2$ . Thus 
$lag(\tau_{k,l}, t^{*}, S)=$\\ $A(\tau_{k,l}, r_{k,l}, t^{*}, PS) - A(\tau_{k,l},  r_{k,l}, t^{*}, S) < 0$, which violates Eq. (\ref{eq:as_lag2}).

\textbf{case B.3 : $B_1>0 , B_2 > 0, B_3 > 0$.} First we consider $lag(\tau_{k,l}, t^{*}, S)$. By the definitions of $A(\tau_{i,j}, t_1, t_2, PS)$ and $A(\tau_{i,j}, t_1, t_2, S)$, we have
\begin{eqnarray}
\label{eq:c6_1}
&& A(\tau_{k,l}, r_{k,l}, t^{*}, PS)  \nonumber \\ 
&=& (B_1+B_2+B_3) \cdot U_k  \nonumber\\
&&{\lbrace \rm{by}~Lemma \ref{lem:lem2} \rbrace} \nonumber\\ 
&{\leq}& (B_2 \cdot \delta_k+B_2 +B_3) \cdot U_k  \nonumber\\
&=& B_2 \cdot (\delta_k+1) \cdot U_k + B_3 \cdot U_k,
\end{eqnarray}
and $A(\tau_{k,l}, r_{k,l}, t, S)= B_2$ . By Eqs.  (\ref{eq:as_lag2}) and (\ref{eq:c6_1}),
\begin{eqnarray}
\label{eq:c6_2}
0 &<& lag(\tau_{k,l}, t^{*}, S)  \nonumber\\
&=& A(\tau_{k,l}, r_{k,l}, t^{*}, PS) - A(\tau_{k,l},  r_{k,l}, t^{*}, S)  \nonumber\\
&\leq& B_2 \cdot (\delta_k+1) \cdot U_k + B_3 \cdot U_k - B_2   \nonumber \\ 
&=& B_2 \cdot \big((\delta_k+1) \cdot U_k -1\big) + B_3 \cdot U_k 
\end{eqnarray} 
By Eqs. (\ref{eq:req1}) and (\ref{eq:c6_2}),
\begin{equation}
\label{eq:c6_3}
B_2 < \frac{B_3 \cdot U_k}{ 1-U_k \cdot (\delta_k+1)}.
\end{equation}
Now let us consider $LAG(\tau, t^{*}, S)$. Because $t^*$ is the earliest time instant that $LAG(\tau, t^{*}, S)> 0$, and $r_{k,l} < t^*$ by the definition of $\tau_{k,l}$, we thus have $LAG(\tau, r_{k,l}, S) \leq 0$. By Eq. (\ref{eq:LAG for task set_2}),  $LAG(\tau, t^{*}, S) = LAG(\tau, r_{k,l}, S)+ A(\tau, r_{k,l}, t^{*}, PS) - A(\tau, r_{k,l}, t^{*}, S)$. Thus, 

\begin{eqnarray}
\label{eq:c6_4}
LAG(\tau, t^{*}, S) \leq A(\tau, r_{k,l}, t^{*}, PS) - A(\tau, r_{k,l}, t^{*}, S).
\end{eqnarray}
By the definitions of $A(\tau, t_1, t_2, PS)$ and $A(\tau, t_1, t_2, S)$, we have
\begin{eqnarray}
\label{eq:c6_5}
A(\tau, r_{k,l}, t^{*}, PS) = U_{sum} \cdot (B_1+B_2+B_3)
\end{eqnarray} 
and
\begin{eqnarray}
\label{eq:c6_6}
A(\tau, r_{k,l}, t^{*}, S)  \geq m \cdot B_3+ B_2.
\end{eqnarray}
Thus, by Eq. (\ref{eq:as_lag1}), we have
\begin{eqnarray}
\label{eq:c6_7}
0 &<& LAG(\tau, t, S) \nonumber\\
&&{\lbrace \rm{by}~(\ref{eq:c6_4}) \rbrace} \nonumber\\ 
&\leq& A(\tau, r_{k,l}, t^{*}, PS) - A(\tau, r_{k,l}, t^{*}, S)  \nonumber\\
&&{\lbrace \rm{by}~(\ref{eq:c6_5}) \rbrace} \nonumber\\ 
&=& U_{sum} \cdot (B_1+B_2+B_3)- A(\tau, r_{k,l}, t^{*}, S) \nonumber \\ 
&&{\lbrace \rm{by}~(\ref{eq:c6_6}) \rbrace} \nonumber\\ 
&\leq& U_{sum} \cdot (B_1+B_2+B_3)- (m \cdot B_3+ B_2) \nonumber \\
&&{\lbrace \rm{by}~Lemma~\ref{lem:lem2} \rbrace} \nonumber\\   
&\leq& U_{sum}\cdot (B_2 \cdot \delta_k+B_2+B_3)- m \cdot B_3-B_2 \nonumber \\
&=& \big(U_{sum}\cdot (\delta_k+ 1)-1\big)\cdot B_2+ (U_{sum}-m) \cdot B_3\nonumber \\
&&{\lbrace \rm{by}~(\ref{eq:c6_3}) \rbrace} \nonumber\\ 
&<& \frac{\big(U_{sum}\cdot (\delta_k+1)-1\big) \cdot B_3 \cdot U_k}{1-(\delta_k+1) \cdot U_k }  + (U_{sum}-m) \cdot B_3.\nonumber
\end{eqnarray}
By rearrangements, we have 
\begin{eqnarray}
\label{eq:c6_8}
&&U_{sum} 
\nonumber \\
&>& m-\big((m-1)\cdot U_k + m \cdot U_k \cdot \delta_k\big) 
\nonumber \\
&>& m-L,
\end{eqnarray}
which violates Eq. (\ref{eq:as_sum}).

Thus far, we have discussed all of the possible cases and each case implies a contradiction. Lemma \ref{lem:proof1} thus follows. 
\end{proof}
Lemma \ref{lem:proof1} implies the following schedulability test.
\newtheorem{theorem}{\textbf{Theorem}}
\begin{theorem}
\label{theorem:wotest}
Any write-only task system $\tau$ can be successfully scheduled under GEDF on $m$  identical processors, provided $U_{i}\cdot (1+\delta_{i})< 1$ holds for each $\tau_i \in \tau$, and $U_{sum}\leq m-L$ holds where $L$ is defined in Lemma \ref{lem:proof1}.
\end{theorem}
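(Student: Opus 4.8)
The plan is to observe that Theorem~\ref{theorem:wotest} is nothing more than a restatement of Lemma~\ref{lem:proof1} in the language of schedulability, so the entire argument reduces to matching hypotheses. First I would note that the condition ``$U_i \cdot (1+\delta_i) < 1$ for each $\tau_i \in \tau$'' is precisely Eq.~(\ref{eq:req1}) quantified over all $i$, and the condition ``$U_{sum} \le m - L$'' with $L = \max\{L_1,\dots,L_n\}$ and $L_i = (m-1)U_i + m U_i \delta_i$ is precisely Eq.~(\ref{eq:as_sum}) together with the definition of $L$ given in Lemma~\ref{lem:proof1}. Hence the two premises of the theorem are exactly the two premises of the lemma.

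Next I would invoke Lemma~\ref{lem:proof1}, which guarantees that under these premises no job misses its deadline in the GEDF schedule of $\tau$ on $m$ identical processors. Since a task system is, by definition, successfully scheduled by an algorithm precisely when every job of every task completes its last phase by its deadline, the conclusion ``no job misses its deadline under GEDF'' is synonymous with ``$\tau$ can be successfully scheduled under GEDF on $m$ identical processors.'' This yields the theorem directly.

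There is essentially no obstacle here: the only thing to be careful about is terminology, namely confirming that ``successfully scheduled'' is the intended reading of the (implicit-deadline, HRT) feasibility notion set up in Sec.~\ref{sec:sysmodel} and that it coincides with the deadline-met statement of Lemma~\ref{lem:proof1}. I would also remark, as a one-line closing observation, that the bound is meaningful only when $m - L > 0$, i.e. $L < m$, which is implied by Eq.~(\ref{eq:req1}) since $L_i = (m-1)U_i + mU_i\delta_i < (m-1)U_i + m(1 - U_i) = m - U_i < m$; this keeps the stated condition $U_{sum} \le m - L$ non-vacuous, but it is not needed for correctness of the implication itself.
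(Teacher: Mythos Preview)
Your proposal is correct and matches the paper's own treatment: the paper simply states that Lemma~\ref{lem:proof1} implies the schedulability test of Theorem~\ref{theorem:wotest}, without any additional argument. Your extra remark that Eq.~(\ref{eq:req1}) forces $L<m$ is a harmless (and correct) sanity check that the paper does not include.
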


\paragraph*{\textbf{Compared to the suspension-oblivious density test}}Density test \cite{goossens2003priority} is a well-known schedulability test originally designed for HRT task systems with no suspensions. The following theorem states the density test. 

\begin{theorem}
\label{theorem:dtest}
\cite{goossens2003priority} A HRT task system can be successfully scheduled by GEDF on $m$ identical processors, provided by $U_{sum} \leq m-(m-1) \cdot U_{max}$, where $U_{max}= max\{U_{1},...,U_n\}$.
\end{theorem}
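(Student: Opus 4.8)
The plan is to obtain Theorem~\ref{theorem:dtest} as a corollary of the machinery just developed. A sporadic task with no I/O operations is exactly a write-only task whose writing phase is empty, i.e.\ $W_i = 0$, hence $\delta_i = W_i/C_{i,1} = 0$. Substituting $\delta_i = 0$ into Lemma~\ref{lem:proof1} gives $L_i = (m-1)\cdot U_i$, so $L = (m-1)\cdot U_{max}$, and the side condition $U_i\cdot(1+\delta_i) < 1$ collapses to $U_i < 1$, which follows from Eq.~(\ref{eq:req1_2}) (with the harmless boundary $U_i = 1$ forcing $U_{sum}\le 1$, a trivial case that is checked directly). Theorem~\ref{theorem:wotest} then yields schedulability whenever $U_{sum}\le m - (m-1)\cdot U_{max}$, which is precisely Theorem~\ref{theorem:dtest}. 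So the short proof is: apply Theorem~\ref{theorem:wotest} with $\delta_i\equiv 0$.

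If a self-contained argument is preferred (matching the original lag-based treatment), I would replay the contradiction of Lemma~\ref{lem:proof1} in the degenerate no-suspension setting. Assume some job misses its deadline, let $\tau_{i,j}$ be the first such job (discarding lower-priority jobs), so $lag(\tau_k, d_{i,j}, S) = 0$ for $k\neq i$ and $LAG(\tau, d_{i,j}, S) > 0$. Let $t^*$ be the earliest instant with $LAG(\tau, t^*, S) > 0$, so $LAG(\tau, r, S)\le 0$ for every $r < t^*$; pick $\tau_k$ with $lag(\tau_k, t^*, S) > 0$, which has a single pending job $\tau_{k,l}$ released at $r_{k,l} < t^*$ with $lag(\tau_{k,l}, t^*, S) > 0$. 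Partition $[r_{k,l}, t^*)$ into the unit intervals where $\tau_{k,l}$ computes (total length $B_2$) and those where it does not ($B_3$); since there are no suspensions, the set $\beta_1$ of Lemma~\ref{lem:proof1} is empty and every interval counted in $B_3$ is busy. The subcases $B_2 = B_3 = 0$, ``$B_3 = 0,\ B_2 > 0$'', and ``$B_2 = 0,\ B_3 > 0$'' are handled exactly as cases A.1--A.3.

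In the remaining subcase $B_2, B_3 > 0$, from $0 < lag(\tau_{k,l}, t^*, S)\le (B_2+B_3)U_k - B_2$ I get $B_2 < B_3 U_k/(1-U_k)$; and from $LAG(\tau, r_{k,l}, S)\le 0$, Eq.~(\ref{eq:LAG for task set_2}), the busyness of the $B_3$ intervals, and the fact that $\tau_{k,l}$ runs throughout the $B_2$ intervals, I get $0 < LAG(\tau, t^*, S)\le U_{sum}(B_2+B_3) - (mB_3 + B_2)$. Combining and simplifying yields $U_{sum} > m - (m-1)U_k \ge m - (m-1)U_{max}$, contradicting the hypothesis.

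The only points needing care --- all strictly easier than in Lemma~\ref{lem:proof1} --- are: that the $B_3$ intervals are busy (immediate from the maximality of $t^*$, since otherwise $\tau_{k,l}$ would be comp-available with fewer than $m$ higher-priority jobs competing, contradicting GEDF); that $A(\tau, r_{k,l}, t^*, S)\ge mB_3 + B_2$, using disjointness of the $B_2$ and $B_3$ intervals so that $\tau_{k,l}$'s own work adds on top of the $mB_3$ busy work; and the boundary $U_{max} = 1$. I do not expect any of these to be a real obstacle: the whole conceptual content already lives in the proof of Lemma~\ref{lem:proof1}, of which this is the special case $\delta_i = 0$.
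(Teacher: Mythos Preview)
The paper does not prove Theorem~\ref{theorem:dtest} at all: it is quoted verbatim from Goossens et al.\ \cite{goossens2003priority} as a known result and serves only as a baseline for comparison. So there is no ``paper's own proof'' to match.

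That said, your derivation is correct and is a pleasant observation the paper does not make explicit: an ordinary (non-suspending) sporadic task is the degenerate write-only task with $W_i=0$, hence $\delta_i=0$, $L_i=(m-1)U_i$, $L=(m-1)U_{max}$, and Theorem~\ref{theorem:wotest} specializes to Theorem~\ref{theorem:dtest} (modulo the strict-vs-nonstrict boundary $U_i=1$, which you handle). Your self-contained replay is likewise sound and is literally Case~A of the paper's proof of Lemma~\ref{lem:proof1} (and, equivalently, Case~B.3 in the appendix proof of Lemma~\ref{proof2}); the chain $0<LAG\le U_{sum}(B_2+B_3)-(mB_3+B_2)$ together with $B_2<B_3U_k/(1-U_k)$ leading to $U_{sum}>m-(m-1)U_k$ appears there verbatim.

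One tiny remark: the busyness of the $B_3$ intervals is not ``from the maximality of $t^*$'' but simply from GEDF being work-conserving --- if $\tau_{k,l}$ is comp-available yet not scheduled, all $m$ processors are occupied by higher-priority jobs. Your parenthetical already says this; just drop the reference to $t^*$.
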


By applying suspension-oblivious approach(i.e., treating all suspension as computation) to the density test, we can obtain the suspension-oblivious density test, which is he only existing utilization-based test with polynomial time complexity that can handle HRT suspending task systems. Theorem \ref{theorem:sdtest} states the suspension-oblivious density test.

\begin{theorem}
\label{theorem:sdtest}
A HRT suspending task system can be successfully scheduled by GEDF on $m$ identical processors, provided by $U_{sum} \leq m-(m-1) \cdot Z_{max}-V_{sum}$, where $Z_{i}=U_i+V_i, Z_{max}= max\{Z_{1},...,Z_n\}$ and $Z_{sum}= \sum_{i= 1}^{n}Z_{i}$.
\end{theorem}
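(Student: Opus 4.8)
The plan is to derive Theorem~\ref{theorem:sdtest} from the density test (Theorem~\ref{theorem:dtest}) via the standard suspension-oblivious reduction. Given the suspending task system $\tau=\{\tau_1,\dots,\tau_n\}$, I would build an ordinary (non-suspending) implicit-deadline task system $\tau^{so}=\{\tau_1^{so},\dots,\tau_n^{so}\}$ by relabelling every suspension region as computation: $\tau_i^{so}$ has worst-case computation time $C_i+S_i$ and period $T_i$. By Eq.~(\ref{eq:req1_2}) we have $C_i+S_i\le T_i$, so $\tau^{so}$ is a legitimate implicit-deadline task system, and for each $i$ the utilization (hence density, since deadlines are implicit) of $\tau_i^{so}$ equals $(C_i+S_i)/T_i=U_i+V_i=Z_i$.

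The next step is a one-line algebraic observation: the hypothesis of Theorem~\ref{theorem:sdtest} is exactly the density-test hypothesis applied to $\tau^{so}$. Indeed $\sum_{i=1}^{n}Z_i=U_{sum}+V_{sum}$ and the maximum per-task utilization of $\tau^{so}$ is $Z_{max}$, so the assumed inequality $U_{sum}\le m-(m-1)\cdot Z_{max}-V_{sum}$ rearranges to $Z_{sum}\le m-(m-1)\cdot Z_{max}$. Theorem~\ref{theorem:dtest} then guarantees that GEDF meets every deadline when scheduling $\tau^{so}$ on $m$ identical processors.

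It remains to transfer this guarantee back to $\tau$. In the GEDF schedule of $\tau^{so}$, each job $\tau_{i,j}^{so}$ is allotted $C_i+S_i$ units of processor time inside $[r_{i,j},d_{i,j})$; I would interpret this same schedule as a schedule of the job $\tau_{i,j}$ in which the slots standing for suspension are simply executed on a processor as if they were computation, so that the phases still run in program order, each phase occupies exactly its required length, and the job finishes by $d_{i,j}$. Since a suspending job of $\tau$ in fact needs no processor, this is a legal (if wasteful) realization of $\tau$, so no job of $\tau$ misses its deadline. I expect this last step to be the only subtle point: one must check that reinterpreting the preemptible ``suspension-as-computation'' slots does not clash with the fixed phase-interleaving pattern of a task (compute/suspend/compute, or read/compute/write), i.e., that the processor time the over-approximated schedule devotes to a job can always be apportioned to that job's phases in order. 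This is the standard core of the soundness of suspension-oblivious analysis, and for implicit-deadline systems it requires no additional assumptions; the first two steps are immediate.
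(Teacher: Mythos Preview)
Your first two steps are exactly what the paper has in mind: the paper does not give a formal proof of Theorem~\ref{theorem:sdtest} at all, but simply says it is obtained ``by applying [the] suspension-oblivious approach (i.e., treating all suspension as computation) to the density test,'' which is precisely your construction of $\tau^{so}$ and the rewriting $U_{sum}\le m-(m-1)Z_{max}-V_{sum}\iff Z_{sum}\le m-(m-1)Z_{max}$.

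Your third step, however, does not establish what the theorem claims. You take the GEDF schedule of $\tau^{so}$ and reinterpret it as a (wasteful) schedule of $\tau$ in which suspension phases are carried out while holding a processor. That shows $\tau$ is \emph{feasible} on $m$ processors, but the theorem asserts that \emph{GEDF applied to $\tau$} meets all deadlines. The GEDF schedule of $\tau$ is a different object: when a job of $\tau$ suspends it vacates its processor, other jobs move in, and from that point on the two schedules diverge. Since GEDF is not optimal on multiprocessors, feasibility of $\tau$ does not by itself imply GEDF-schedulability of $\tau$.

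The argument that actually closes the gap (and is what ``suspension-oblivious analysis'' refers to in the literature) is a dominance/sustainability step rather than a reinterpretation step: converting a suspension phase of one job into busy computation can only \emph{delay} that job (it now needs a processor it did not need before) and can only \emph{delay} every other job (one fewer processor may be available). Iterating over all jobs, every job in the GEDF schedule of $\tau$ completes no later than the corresponding job in the GEDF schedule of $\tau^{so}$; hence GEDF-schedulability of $\tau^{so}$ implies GEDF-schedulability of $\tau$. Replace your third paragraph with this monotonicity argument and the proof is complete.
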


By comparing our schedulability test to the suspension-oblivious density test, we can see that these two tests are incomparable (i.e., do not dominate each other). However, in our schedulability test, the total utilization loss is caused by the term $L$ in which $m \cdot U_i \cdot \delta_i$ is an $O(m)$ suspension-related utilization loss. While in the suspension-oblivious density test, the total utilization loss is $(m-1) \cdot Z_{max}+V_{sum}$, which  is an $\Omega (n)$ suspension-related utilization loss.

We evaluate our schedulability test by conducting extensive experiments in Sec. \ref{sec:woexp}. In the next section, we consider the read-write task model.

\section{Supporting Read-write Application}
\label{sec:rw}

In this section, we consider supporting read-write applications. Unfortunately, our analysis technique presented in Sec. \ref{sec:wo} cannot be directly applied to the read-write task model. If a job begins with a reading phase, then the value $\delta_i$ defined in Sec. \ref{sec:sysmodel} is no longer well defined. To deal with the read-write task model, we design an I/O placement policy and a corresponding new scheduling algorithm, which enable us to completely eliminate the negative impact due to read-and write-induced suspensions.

\subsection{I/O Placement Policy}
\label{subsec:ioplacement}
As shown in Fig. \ref{fig:simpleexample}, if phases are required to be executed in a pre-defined order, then the negative impact due to read-/write-induced suspensions is fundamental. Motivated by this, we design an flexible I/O placement policy that allows the  scheduler to decide when to compute and suspend within each job's execution window. This resulting desirable property is called \textit{flexible suspension pattern}.

To achieve the flexible suspension pattern, our I/O placement policy let job $\tau_{i,j-1}$ help $\tau_{i,j}$ perform its reading phase, and let job $\tau_{i,j+1}$ help $\tau_{i,j}$ perform its writing phase. Let $\overline{\tau}$ and $\tau$  denote the task system using our I/O placement policy and the original task system, respectively. For each task $\overline{\tau}_i$, a pre-fetching job $\overline{\tau}_{i,0}$ executes the reading phase of $\tau_{i,1}$; job $\overline{\tau}_{i,1}$ contains the computing phase of $\tau_{i,1}$ and the reading phase of $\tau_{i,2}$. For $j>2$, $\overline{\tau}_{i,j}$ contains the writing phase of $\tau_{i,j-1}$, the computing phase of $\tau_{i,j}$ and the reading phase of $\tau_{i,j+1}$. The following example illustrates our I/O placement policy.
\begin{figure}[t]
	\begin{center}
	\includegraphics[width=3.3in]{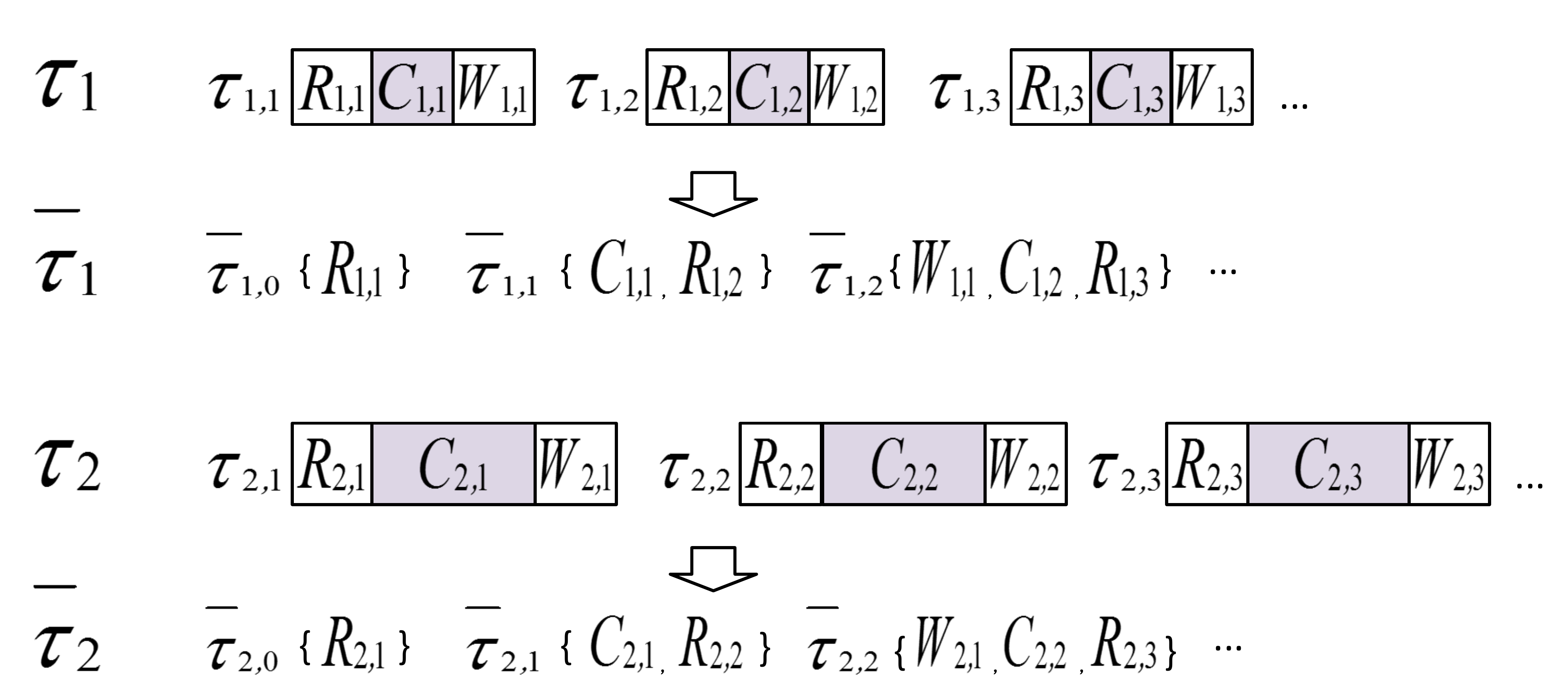} 
	\end{center} 
\vspace{-5mm}
\caption{I/O placements}
\vspace{-3mm}
\label{fig:ioplacement}
\end{figure}

\paragraph*{\textbf{Example}}Consider a read-write task system containing two tasks $\tau_1$ and $\tau_2$ on a uniprocessor platform. Each released job of both tasks reads data from disk, performs computation on processor, and writes the results back to disk.

Fig. \ref{fig:ioplacement}, shows the transformed task system $\overline{\tau}$  using our proposed flexible I/O placement policy. After the transformation, the suspension phases and the computation phases of the same job become independent. For example, for job $\overline{\tau}_{1,2}$ shown in Fig. \ref{fig:ioplacement}, since the writing phase $W_{1,1}$ writes the output of the computing phase $C_{1,1}$ to disk, it has no dependency with the computing phase $C_{2,1}$. 

Note that the reordering process happens in the application programming phase. It has been well designed before running and thus will not incur the locality loss of data at runtime. Based on this I/O placement policy, we design the following GEDF-based scheduling algorithm.

\begin{figure}[]
  \centering
  \subfloat[Task system using the original  I/O placement under GEDF]{\label{fig:ca}\includegraphics[width=0.5\textwidth]{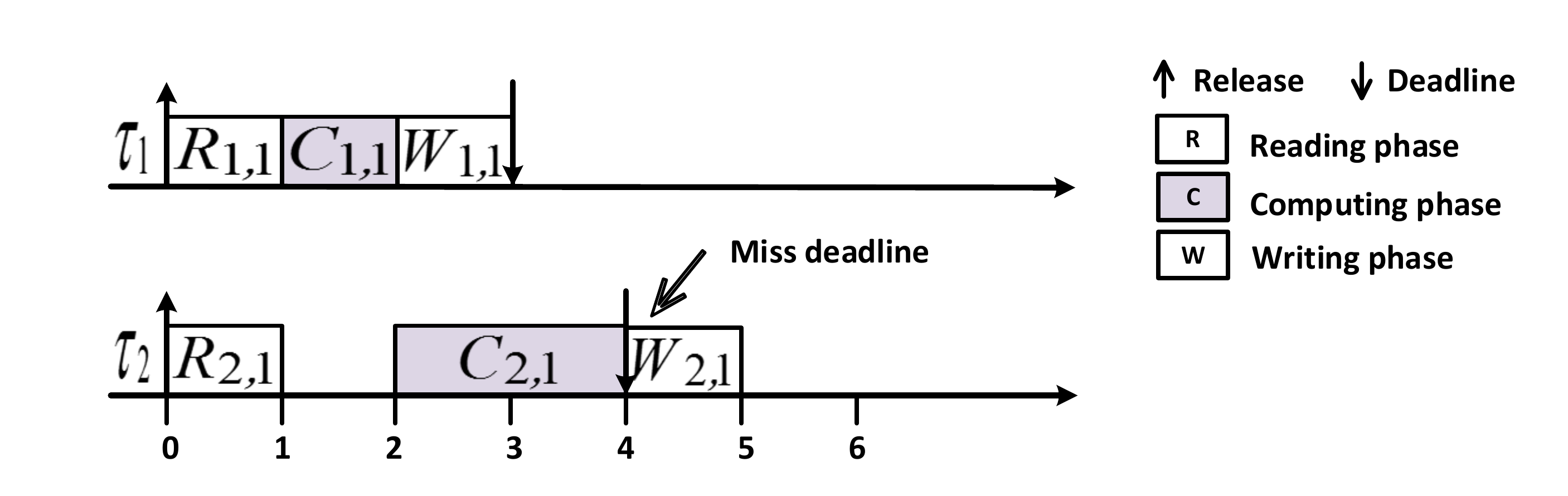}} \hspace{0mm}
  \subfloat[Task system using the original  I/O placement under GEDF-R/W]{\label{fig:cb}\includegraphics[width=0.5\textwidth]{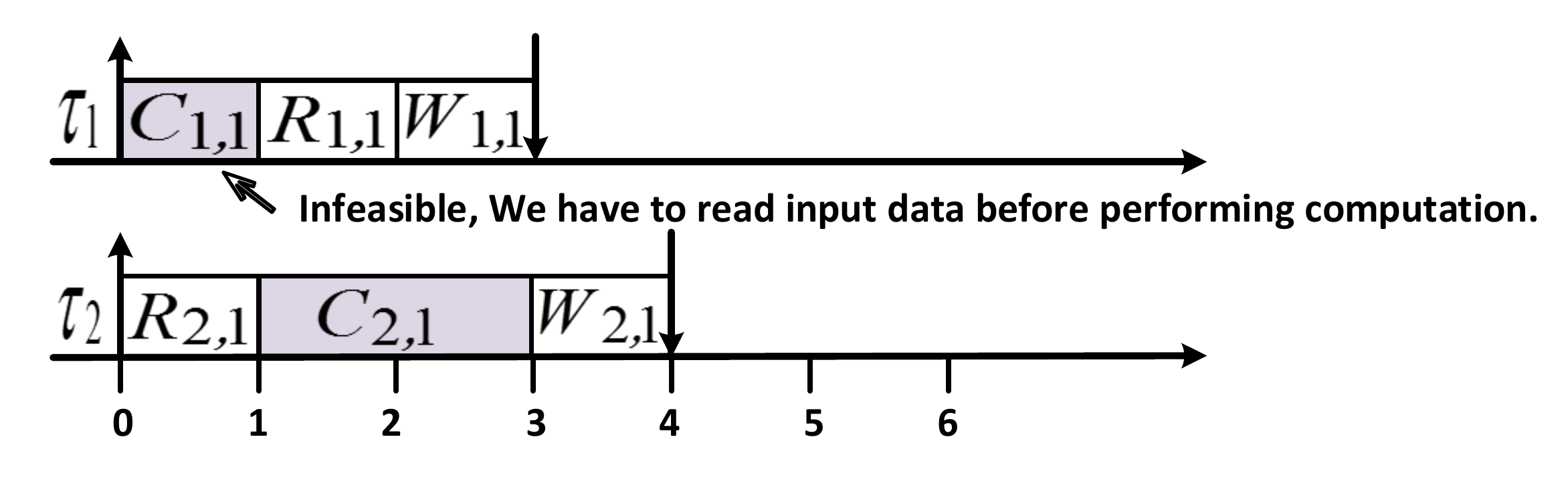}} \hspace{0mm}
  \subfloat[Task system using our I/O placement under GEDF-R/W]{\label{fig:cc}\includegraphics[width=0.5\textwidth]{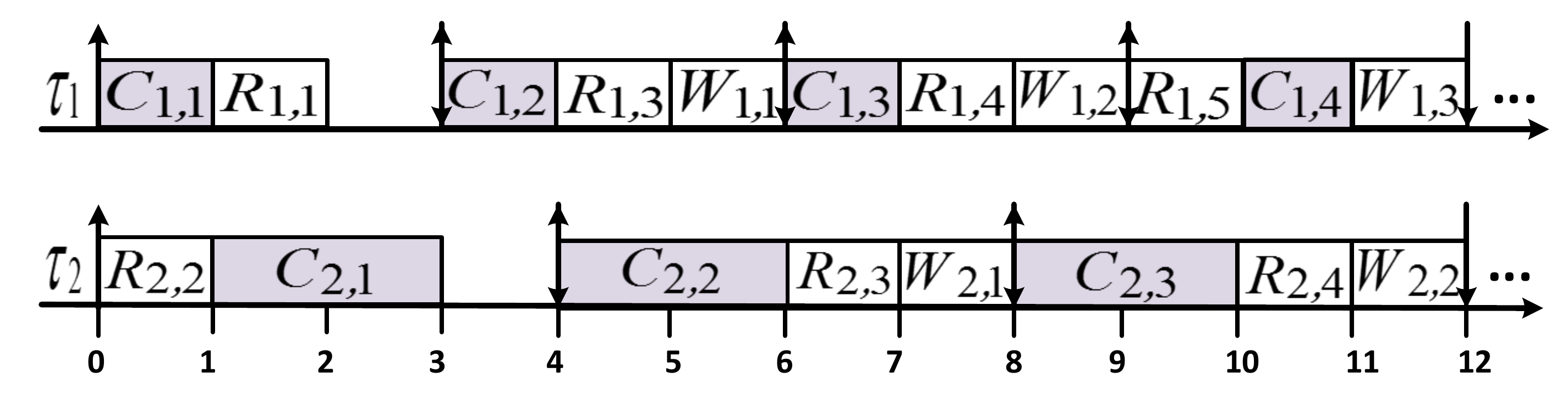}} \hspace{0mm} 
  \caption{GEDF-R/W schedules} \vspace{-2mm} \normalsize
  \label{fig:case_sche}
\end{figure}

\subsection{GEDF-R/W Scheduling Algorithm}
\label{subsec:EEDF}
Our scheduling algorithm, denoted GEDF-R/W, extends GEDF in a way such that the scheduler decides when to perform computation and suspension. GEDF-R/W is defined as follows.

\begin{definition}
\label{def:GEDF-R/W}
At each time instant, GEDF-R/W selects $m$ comp-pending jobs with the earliest deadlines for computation. If an comp-pending job $\tau_{i,j}$ is comp-preempted, $\tau_{i,j}$ will perform suspension if it has not finished all its suspensions. 
\end{definition}

We use the example task system in Sec. \ref{subsec:ioplacement} to illustrate GEDF-R/W scheduling algorithm. In Fig. \ref{fig:case_sche}(a), we can easily see that $\tau$ is not schedulable under GEDF even the total system utilization is low. In Fig. \ref{fig:case_sche}(b), we try to apply GEDF-R/W to the original task system $\tau$. However GEDF-R/W scheduling is infeasible without the flexible suspension pattern. Fig. \ref{fig:case_sche}(c) shows that $\overline{\tau}$ can be successfully scheduled by GEDF-R/W. 


\subsection{Scheduling Analysis}
\label{subsec:rwanalysis}

In this section, we analyze the GEDF-R/W schedule of read-write task systems using our I/O placement policy.

\paragraph*{\textbf{LAG and SLAG}}In Sec. \ref{sec:wo}, we introduce the lag-based technique and the related definitions. In this section, $A(\tau_{i,j}$ $, t_1, t_2, S)$, $lag(\tau_i, t, S)$ and $LAG(\tau, t, S)$ are defined in the same manner corresponding to the GEDF-R/W schedule $S$ and the PS schedule $PS$.

For a write-only task, a job is finished if and only if it has finished all its computation because it ends with a computing phase. However, this is not true for read-write tasks, where the last phase finished in a job of a read-write task could be a suspension phase. Intuitively, the PS schedule provides a good means to track the progress on computation performed in the GEDF-R/W schedule. But only using the PS schedule is insufficient to track the progress on suspensions performed in the GEDF-R/W schedule, in which case it is hard to check whether a job of a read-write task misses its deadline (because a job can still miss its deadline while completing all the computation). Therefore, to deal with read-write task model, we define the following perfect schedule for suspensions denoted as the SPS schedule. In a SPS schedule $SPS$, each read-write task $\tau_i$ suspends with a speed equal to $V_i$ when it is sus-pending (which ensures that each of its jobs finishes its suspension exactly at its deadline). We use the task system in Fig. \ref{fig:simpleexample} to illustrate the $SPS$ schedule as shown in Fig. \ref{fig:SPSschedule}. In the SPS schedule, the two tasks perform suspensions with a speed equal to their corresponding suspension ratios 2/3. Similar to $PS$, $SPS$ is not a real schedule and is only used for analysis purposes to keep track of the suspension.

\begin{figure}[th]
	\begin{center}
	\includegraphics[width=2.5in]{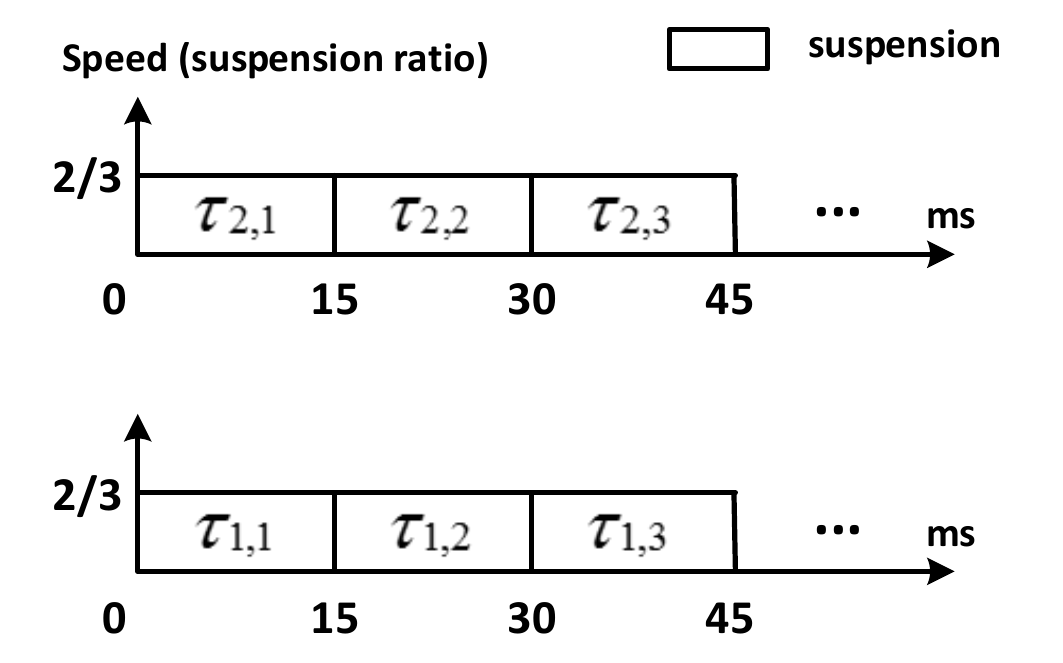} 
	\end{center} 
\vspace{-5mm}
\caption{Example SPS schedule}
\vspace{-3mm}
\label{fig:SPSschedule}
\end{figure}

Let $slag(\tau_{i,j},t,S)$ denote the difference of suspensions done by $\tau_{i,j}$ in $SPS$  and $S$. Next, we analyze the schedulability for a read-write task system on $m$ identical processors under GEDF-R/W.

\begin{lemma}
\label{lem:pre_analysis}
If job $\tau_{i,j}$ misses its deadline at $d_{i,j}$, then   
\begin{equation}
\label{eq:pre}
lag(\tau_{i,j}, d_{i,j}, S) > 0.                   
\end{equation}
\end{lemma}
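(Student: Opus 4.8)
The plan is to prove the key claim that a job which misses its deadline cannot have completed all of its computation by $d_{i,j}$; the requirement $U_i + V_i \le 1$ then forces a strict gap between the computation done in $PS$ and in $S$, which is exactly the asserted inequality. First I would note that, since $\tau_{i,j}$ misses its deadline at $d_{i,j}$, it has not finished its last phase at $d_{i,j}$, hence it is pending throughout the window $[r_{i,j}, d_{i,j})$, an interval of length exactly $T_i$. I partition this window into $T_i$ unit intervals; in each one, under GEDF-R/W the job does exactly one of: (a) compute, (b) suspend, or (c) neither.

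The heart of the argument is to show that a type-(c) unit interval can occur only after $\tau_{i,j}$ has completed all of its suspension. Indeed, if $\tau_{i,j}$ is comp-pending during a unit interval, it is either among the $m$ earliest-deadline comp-pending jobs and computes (type (a)), or it is comp-preempted, in which case the definition of GEDF-R/W makes it suspend as long as suspension remains (type (b)); and if $\tau_{i,j}$ is no longer comp-pending (all computation done) but still has suspension left, it continues to suspend (type (b)), since suspension corresponds to an I/O operation that needs no processor. Thus a type-(c) interval forces the job to have already finished all $S_i$ units of suspension. Now suppose, for contradiction, that $\tau_{i,j}$ has completed all of its computation by $d_{i,j}$; since it nonetheless missed its deadline, it must still have unfinished suspension at $d_{i,j}$, so its suspension is incomplete at every instant of $[r_{i,j}, d_{i,j})$ and no type-(c) interval occurs. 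Then all $T_i$ unit intervals are of type (a) or (b); exactly $C_i$ of them are type (a), so $\tau_{i,j}$ performs $T_i - C_i$ units of suspension by $d_{i,j}$. But $U_i + V_i \le 1$ gives $C_i + S_i \le T_i$, so $T_i - C_i \ge S_i$, i.e. all of its suspension is finished by $d_{i,j}$ — contradicting the deadline miss. Hence $A(\tau_{i,j}, 0, d_{i,j}, S) < C_i$.

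To finish, recall that in the $PS$ schedule each job completes its computation exactly at its deadline, so $A(\tau_{i,j}, 0, d_{i,j}, PS) = C_i$. Therefore $lag(\tau_{i,j}, d_{i,j}, S) = A(\tau_{i,j}, 0, d_{i,j}, PS) - A(\tau_{i,j}, 0, d_{i,j}, S) = C_i - A(\tau_{i,j}, 0, d_{i,j}, S) > 0$, as claimed. I expect the main obstacle to be making the case analysis of the first two paragraphs airtight: precisely characterizing when a pending job fails to make progress on \emph{either} computation or suspension, which relies on the exact semantics of GEDF-R/W together with the flexible suspension pattern guaranteed by the I/O placement policy (so that the computing and suspension components of a job are genuinely independent and may be scheduled in any order).
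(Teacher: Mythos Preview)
Your proposal is correct and follows essentially the same approach as the paper: both arguments partition $[r_{i,j},d_{i,j})$ into computing/suspending/idle unit intervals, observe that under GEDF-R/W an idle interval can occur only after all suspension is exhausted, and then use $U_i+V_i\le 1$ (i.e.\ $C_i+S_i\le T_i$) to force a contradiction, yielding that computation is unfinished and hence $lag>0$. The only cosmetic difference is that the paper packages the counting via an auxiliary ``suspension-lag'' $slag$ with respect to an ideal $SPS$ schedule and shows $lag+slag=(U_i+V_i-1)T_i\le 0$, whereas you count intervals directly; the content is identical.
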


\begin{proof}
By the definitions of $lag(\tau_{i,j}, d_{i,j}, S)$ and $slag(\tau_{i,j}$ $, d_{i,j}, S)$, they cannot be negative at $d_{i,j}$. And because $\tau_{i,j}$ has not finished all its suspensions or computation at $d_{i,j}$, at least one of them must be positive. Thus,
\begin{equation}
\label{eq:as_lag1_2}
lag(\tau_{i,j}, d_{i,j}, S)+ slag(\tau_{i,j}, d_{i,j}, S) > 0.                   
\end{equation}

Suppose that $\tau_{i,j}$ has not finished its suspensions at $d_{i,j}$. In this case, there are three kinds of unit intervals to be considered in  $[r_{i,j},d_{i,j})$. Let $\gamma_1$ denote the set of unit intervals in which $\tau_{i,j}$ is comp-preempted and suspends; let $\gamma_2$ denote the set of unit intervals in which $\tau_{i,j}$ is not comp-preempted and suspends; let $\gamma_3$ denote the set of unit intervals in which $\tau_{i,j}$ is not comp-preempted and computes. Thus $\gamma_1\cup \gamma_2 \cup \gamma_3= [r_{k,l}, d_{i,j})$ and they are pairwise disjoint. Let $L_1$, $L_2$ and $L_3$ be the length of each set, respectively.  According to GEDF-R/W, $\tau_{i,j}$ must suspend in all unit intervals in $\gamma_1$. Thus,
\begin{eqnarray*}
&& lag(\tau_{i,j}, d_{i,j}, S) \\
&=& U_i \cdot (L_1+L_2+L_3)- L_3 ,\\
and && ~\\    
& & slag(\tau_{i,j}, d_{i,j}, S)  \\
&=& V_i \cdot (L_1+L_2+L_3)- (L_1+L_2),\\
and && ~\\
&& lag(\tau_{i,j}, d_{i,j}, S)+ slag(\tau_{i,j}, d_{i,j}, S)  \\
&=& (U_i+ V_i- 1)\cdot (L_1+L_2+L_3) \\
&&{\lbrace \rm{by}~(\ref{eq:req1_2}) \rbrace} \\
&\leq& 0,
\end{eqnarray*}   
which violates Eq. (\ref{eq:as_lag1_2}). Therefore, $\tau_{i,j}$ must have finished all of its suspensions but have not finished its computation, which implies $lag(\tau_{i,j}, d_{i,j}, S)>0$.

Lemma \ref{lem:pre_analysis} thus follows. Intuitively, Lemma \ref{lem:pre_analysis} implies that when a job misses its deadline it is necessary that it has not finished its computation, which can be used to derive an necessary condition for deadline miss, as shown in Lemma. \ref{proof2}.
\end{proof}

\begin{lemma}
\label{proof2}
If
\begin{equation}
\label{eq:as_sum_2}
U_{sum} \leq m-(m-1) \cdot U_{max},                   
\end{equation}
then no job misses its deadline under GEDF-R/W.
\end{lemma}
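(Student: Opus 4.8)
The plan is to reduce the statement to the suspension-free density test (Theorem~\ref{theorem:dtest}), using two ingredients. The first is Lemma~\ref{lem:pre_analysis}: a job can miss its deadline only if it still has computation remaining at its deadline, so it suffices to prove that under GEDF-R/W every job of the transformed system $\overline{\tau}$ completes \emph{all} of its computation by its deadline. The second is the \emph{flexible suspension pattern} supplied by the I/O placement policy of Sec.~\ref{subsec:ioplacement}, which decouples each job's computation from its (own and inherited) suspensions.

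First I would show that the computation schedule produced by GEDF-R/W on $\overline{\tau}$ yields exactly the same computation completion times as plain GEDF on the ``stripped'' system $\tau'$ obtained from $\overline{\tau}$ by deleting every reading and writing phase --- so $\tau'$ is an ordinary sporadic implicit-deadline system in which each job of task $i$ is pure computation $C_i$ with period $T_i$ (and the pre-fetching jobs $\overline{\tau}_{i,0}$ carry no computation). The crux is that in $\overline{\tau}$ a comp-pending job is always comp-available: by the flexible suspension pattern its computation has no precedence constraint relative to its suspensions, so it may compute at any instant at which it is comp-pending. Hence at each instant GEDF-R/W and GEDF select the same set of jobs to compute (the $m$ comp-pending jobs of earliest deadline, breaking ties by index), and a straightforward induction on discrete time shows the computation progress --- hence the completion times --- agree.

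Next I would invoke the density test. Since $\tau'$ has $\sum_i U_i = U_{sum}$ and $\max_i U_i = U_{max}$ (the pre-fetching jobs contribute nothing), the hypothesis $U_{sum}\leq m-(m-1)\cdot U_{max}$ and Theorem~\ref{theorem:dtest} give that every job of $\tau'$ meets its deadline under GEDF, i.e.\ finishes its computation by its deadline. By the equivalence above, every job of $\overline{\tau}$ finishes all of its computation by its deadline under GEDF-R/W, and the contrapositive of Lemma~\ref{lem:pre_analysis} then shows that no job misses its deadline, as required.

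The main obstacle is the equivalence step: one must argue carefully that the flexible suspension pattern genuinely makes every comp-pending job comp-available (this is exactly where the I/O placement construction is used), treat the pre-fetching jobs and the tie-breaking rule consistently, and check that $\tau'$ is a legitimate sporadic implicit-deadline system so that Theorem~\ref{theorem:dtest} applies verbatim. A more self-contained alternative would mirror the lag-based contradiction of Lemma~\ref{lem:proof1} directly for GEDF-R/W: assuming $\tau_{i,j}$ is the earliest deadline miss, Lemma~\ref{lem:pre_analysis} gives $lag(\tau_{i,j},d_{i,j},S)>0$, hence $LAG(\tau,d_{i,j},S)>0$; taking the earliest $t^{*}$ with $LAG(\tau,t^{*},S)>0$ and a task $\tau_k$ with a single pending job $\tau_{k,l}$, the interval $[r_{k,l},t^{*})$ splits only into busy intervals and intervals where $\tau_{k,l}$ computes (there is no ``$\tau_{k,l}$ suspends'' case, since it can always compute when comp-pending), so the case analysis collapses to the $\delta_k=0$ specialization of \textbf{case A} and yields $U_{sum}>m-(m-1)\cdot U_k\geq m-(m-1)\cdot U_{max}$ --- a contradiction.
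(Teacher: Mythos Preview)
Your \emph{alternative} route --- the self-contained lag-based contradiction --- is exactly what the paper does in the appendix. The paper assumes a first deadline miss, invokes Lemma~\ref{lem:pre_analysis} to get $lag(\tau_{i,j},d_{i,j},S)>0$ and hence $LAG(\tau,d_{i,j},S)>0$, picks the earliest $t^{*}$ with positive $LAG$, finds a pending $\tau_{k,l}$ with positive lag, and partitions $[r_{k,l},t^{*})$ into (i) unit intervals where $\tau_{k,l}$ is not comp-preempted and suspends, (ii) not comp-preempted and computes, (iii) comp-preempted (hence busy). Case (i) is dispatched exactly by your observation: under GEDF-R/W a job suspends while not preempted only after all computation is done, contradicting $lag>0$. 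The remaining case is then literally the $B_1=0$ analogue of \textbf{case A.4} in Lemma~\ref{lem:proof1} and yields $U_{sum}>m-(m-1)U_k$. So on this branch you and the paper coincide.

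Your \emph{primary} route --- stripping suspensions and reducing to Theorem~\ref{theorem:dtest} --- is a genuinely different and more modular argument than the paper's. The key step you identify (under the flexible I/O placement, GEDF-R/W selects the $m$ earliest-deadline \emph{comp-pending} jobs, and each such job can actually compute, so the computation schedule coincides with GEDF on the suspension-free system $\tau'$) is correct and is really the conceptual content of the section; the paper instead re-derives the density bound from scratch via the lag machinery. Your reduction buys brevity and makes explicit why ``the negative impact of suspension has been completely eliminated,'' at the cost of importing Theorem~\ref{theorem:dtest} as a black box; the paper's direct argument is self-contained and also serves as a sanity check that Lemma~\ref{lem:nonincrease} and the $LAG$ framework behave under GEDF-R/W. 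The modeling caveats you flag (pre-fetching jobs with zero computation, tie-breaking, that $\tau'$ is a bona fide implicit-deadline sporadic system) are real but routine; once handled, the reduction is clean.
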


\begin{proof}
Due to the limitation of space, we present this proof in the appendix.
\end{proof}

Lemma \ref{proof2} implies the following schedulability test.

\begin{theorem}
\label{theorem:RWIStest}
A HRT read-write task system using our I/O placement policy can be successfully scheduled under GEDF-R/W on $m$  identical processors, provided by $U_{sum} \leq m-(m-1)\cdot U_{max}$.
\end{theorem}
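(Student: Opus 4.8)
The theorem is an immediate corollary of Lemma \ref{proof2}, so the entire burden of the proof is to establish Lemma \ref{proof2}, whose proof the authors have deferred to the appendix. The plan is to mimic the contradiction argument of Lemma \ref{lem:proof1}, but working with the richer accounting machinery introduced in Sec. \ref{subsec:rwanalysis} — namely both the $PS$ schedule (tracking computation via $lag$ and $LAG$) and the $SPS$ schedule (tracking suspension via $slag$). First I would assume, for contradiction, that $\tau_{i,j}$ is the earliest job to miss its deadline at $d_{i,j}$, breaking ties by priority, and discard all lower-priority jobs since they do not affect the schedule of $\tau_{i,j}$. By Lemma \ref{lem:pre_analysis}, the deadline miss forces $lag(\tau_{i,j},d_{i,j},S)>0$, so $\tau_{i,j}$ has unfinished computation at $d_{i,j}$; consequently $LAG(\tau,d_{i,j},S)>0$, since every other task has zero lag (all their jobs completed). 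Let $t^{*}$ be the earliest instant with $LAG(\tau,t^{*},S)>0$; then $0<t^{*}\le d_{i,j}$ and there is a task $\tau_k$ with $lag(\tau_k,t^{*},S)>0$, hence with a single pending job $\tau_{k,l}$ satisfying $lag(\tau_{k,l},t^{*},S)>0$ and $r_{k,l}<t^{*}$.

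The core of the argument is then a case analysis over the interval $[r_{k,l},t^{*})$, partitioned according to what $\tau_{k,l}$ does in each unit interval. The crucial structural point — and the reason the hypothesis here is only $U_{sum}\le m-(m-1)U_{max}$, with no $\delta_i$ term — is that under GEDF-R/W a comp-pending job runs whenever it is not comp-preempted, and a comp-preempted job necessarily suspends (Definition \ref{def:GEDF-R/W}); moreover the flexible suspension pattern guaranteed by the I/O placement policy means suspension and computation within a job are independent, so there is no ordering constraint forcing a suspension-before-computation prefix. Thus every unit interval in $[r_{k,l},t^{*})$ in which $\tau_{k,l}$ neither computes nor suspends must be a busy interval (all $m$ processors occupied by comp-pending jobs), exactly as in case A of Lemma \ref{lem:proof1}. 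I would let $B_2$ be the length of the sub-intervals where $\tau_{k,l}$ computes, $B_3$ the length of the busy sub-intervals where it does not (the intervals where it only suspends can be folded into the analysis since during them the PS schedule credits $\tau_{k,l}$ at rate $U_k$ while $S$ credits $0$, but one must check these intervals are also busy — here is where GEDF-R/W's rule that a comp-preempted job suspends is used to argue that if $\tau_{k,l}$ is merely suspending and not comp-preempted, processors are still full of comp-pending jobs). From $lag(\tau_{k,l},t^{*},S)>0$ one extracts $B_2<\frac{B_3\cdot U_k}{1-U_k}$, and from $LAG(\tau,r_{k,l},S)\le 0$ together with $A(\tau,r_{k,l},t^{*},PS)=U_{sum}\cdot(\text{length})$ and $A(\tau,r_{k,l},t^{*},S)\ge m\cdot B_3+B_2$ one derives, after rearrangement, $U_{sum}>m-(m-1)U_k\ge m-(m-1)U_{max}$, contradicting Eq. (\ref{eq:as_sum_2}).

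The step I expect to be the main obstacle is the bookkeeping for the suspension-only unit intervals of $\tau_{k,l}$ — the analogue of the $\beta_1$ set — since these did not arise in the computation-ending write-only model's critical case but can now occur anywhere in a read-write job's window. The delicate claim is that such an interval is necessarily busy: $\tau_{k,l}$ is comp-pending there (it still has unfinished computation, as it is pending), so if it is not executing it is comp-preempted, which by Definition \ref{def:GEDF-R/W} means $m$ higher-or-equal-priority comp-pending jobs are computing, i.e. the interval is busy for the task system. Once this is nailed down, the $B_1$-type length can be absorbed into $B_3$ (both contribute $m$ per unit to $A(\tau,\cdot,\cdot,S)$ and $U_{sum}$ per unit to $A(\tau,\cdot,\cdot,PS)$), and the inequality chain collapses to exactly the density-test form. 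A secondary subtlety is verifying that the pre-fetching job $\overline{\tau}_{i,0}$ and the shifted phase structure of $\overline{\tau}$ do not create edge cases at the very first job of a task; I would handle this by noting that $\overline{\tau}_{i,0}$ carries only a reading phase with its own release and deadline, so it fits the general sporadic model and the argument applies uniformly. Assembling these pieces yields the contradiction in every case, proving Lemma \ref{proof2}, and Theorem \ref{theorem:RWIStest} follows immediately.
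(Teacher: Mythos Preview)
Your proposal is correct and follows essentially the same route as the paper's appendix proof of Lemma~\ref{proof2}: invoke Lemma~\ref{lem:pre_analysis} to get $lag(\tau_{i,j},d_{i,j},S)>0$, locate $t^{*}$ and the pending job $\tau_{k,l}$, partition $[r_{k,l},t^{*})$, and extract $U_{sum}>m-(m-1)U_k$ from the busy-interval accounting exactly as in Case~A.4 of Lemma~\ref{lem:proof1}. The only difference is organizational---the paper keeps a separate set $\beta_1$ (unit intervals where $\tau_{k,l}$ suspends while \emph{not} comp-preempted) and shows $B_1>0$ would force $\tau_{k,l}$ to have already finished its computation, contradicting $lag>0$; you instead argue the contrapositive upfront (since $lag(\tau_{k,l},t^{*},S)>0$, $\tau_{k,l}$ is comp-pending throughout, so every non-computing unit interval is comp-preempted and hence busy), which is equivalent---though note your parenthetical justification (``GEDF-R/W's rule that a comp-preempted job suspends'') points the wrong way, and ``unfinished computation, as it is pending'' should read ``unfinished computation, as $lag>0$'' since for read-write tasks pending does not imply comp-pending; you give the correct argument in the following paragraph.
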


This schedulability test is identical to the density test shown in Theorem \ref{theorem:dtest} for ordinary task systems with no suspensions, which implies the negative impact of suspension has been completely eliminated.

\section{Case Study}
\label{sec:case}
In this section, we show the feasibility of our I/O placement policy and the corresponding GEDF-R/W scheduling via a case study implementation. We also evaluate the scheduling performance with respect to the response time bound of tasks. We programmed the real-time matrix calculation read-write applications, which read matrix from disk, perform the matrix calculation, and write the result to disk. For conciseness, let us denote the application programmed using our I/O placement policy as \textit{our application} and the application programmed using the original I/O placement policy as \textit{original application}.

\subsection{Implementation}
Our case study was conducted on an ASUS machine with a two-core CPU running at 3.40GHz. In order to get noticeable response times we used matrices with size 500*500. First, we generated 5000 matrices stored in the disk and the elements in the matrices were randomly generated integers using a uniform distribution $[0, 9]$. We used GEDF to schedule the original application and used GEDF-R/W to schedule our application. We recorded the response time of the first 400 jobs of each task. In our experiments, $100 ms$ is the unit interval of computation and suspension. 

We conducted experiments for two cases: (1) two tasks on a uniprocessor; (2) three tasks on two processors. In case (1), the read-write application has two tasks $\tau_1$ and $\tau_2$ and the function of each task in shown in Fig. \ref{fig:case_func}. As the figure shows, the reading of $\tau_1$ contains 3 unit intervals that reads three matrix $A$, $B$ and $C$ from disk, respectively; the computing phase of $\tau_1$ contains 2 unit intervals that perform two multiplication operations respectively; the writing phase of $\tau_1$ contains 1 unit interval that writes the resulting matrix to disk. The second task $\tau_2$ in case (1) has the similar work mode as shown in Fig. \ref{fig:case_func}. We also pre-conducted an experiment to estimate the length of each kind of phases. Reading one matrix from disk or writing one matrix back to disk consumes less than $100 ms$ in the worst-case and each matrix operation consumes less than $200 ms$ in the worst-case. Thus, we set the periods of $\tau_1$ and $\tau_2$ as $950 ms$ and $1250 ms$, respectively. In case (2), we ran three tasks $\overline{\tau}_1$, $\overline{\tau}_2$ and $\overline{\tau}_3$ on two processors where $\overline{\tau}_1$ and $\overline{\tau}_3$ are identical to $\tau_1$ in case (1) and $\overline{\tau}_2$ is identical to $\tau_2$ in case (1).  




\begin{figure}[h]
	\begin{center}
	\includegraphics[width=3in]{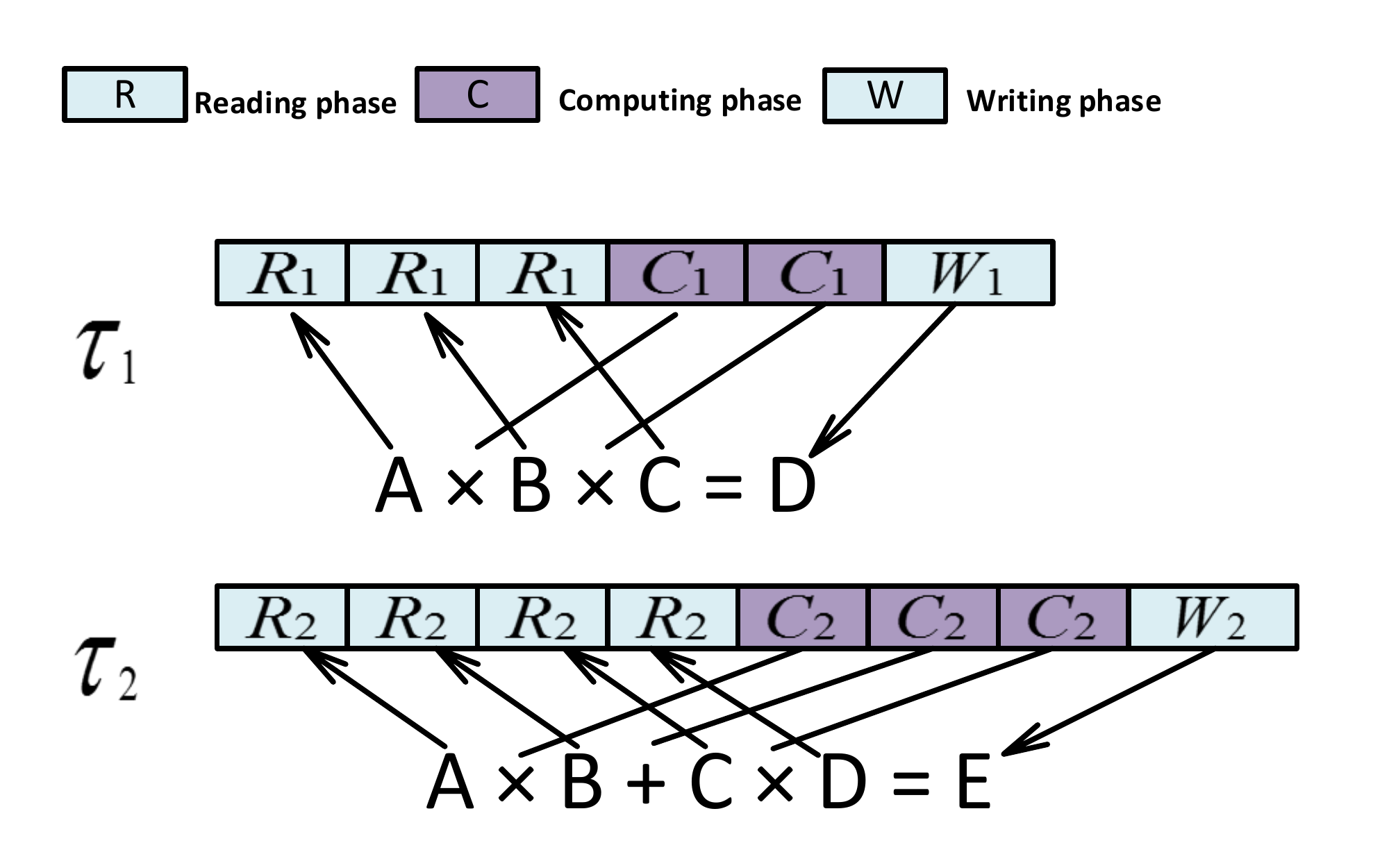} 
	\end{center} 
\vspace{-5mm}
\caption{Matrix calculation}
\vspace{-5mm}
\label{fig:case_func}
\end{figure}

\subsection{Performance Evaluation}
The performance of each cases is shown in Fig. \ref{fig:case}. The $x$-axis denotes the job number and the $y$-axis denotes the response time of each job.

In case (1), the total utilization of $\tau_1$ and $\tau_2$ is about $0.9$ ($400/950+600/1250$) which does not exceed $1$.  According to the analysis in Sec. \ref{subsec:rwanalysis} the response time of $\tau_1$ and $\tau_2$ in our application should not exceed $950 ms$ and $1250 ms$. However, our analysis has not taken the overhead due to job migration into consideration. From Fig. \ref{fig:case} we can see, in practice,  the response time of some jobs of $\tau_1$ in our application is about $50 ms$ lager than the theoretical estimation. However, compare to GEDF, GEDF-R/W performed considerably better with respect to reducing response time. Moreover, as shown in Figs. \ref{fig:case}(a) and (b), GEDF-R/W is able to achieve bounded response time while the response time under GEDF  grows unboundedly.
\begin{figure*}[]
  \centering
  \subfloat[Case 1: Tasks in the original application]{\label{fig:11}\includegraphics[width=0.32\textwidth]{./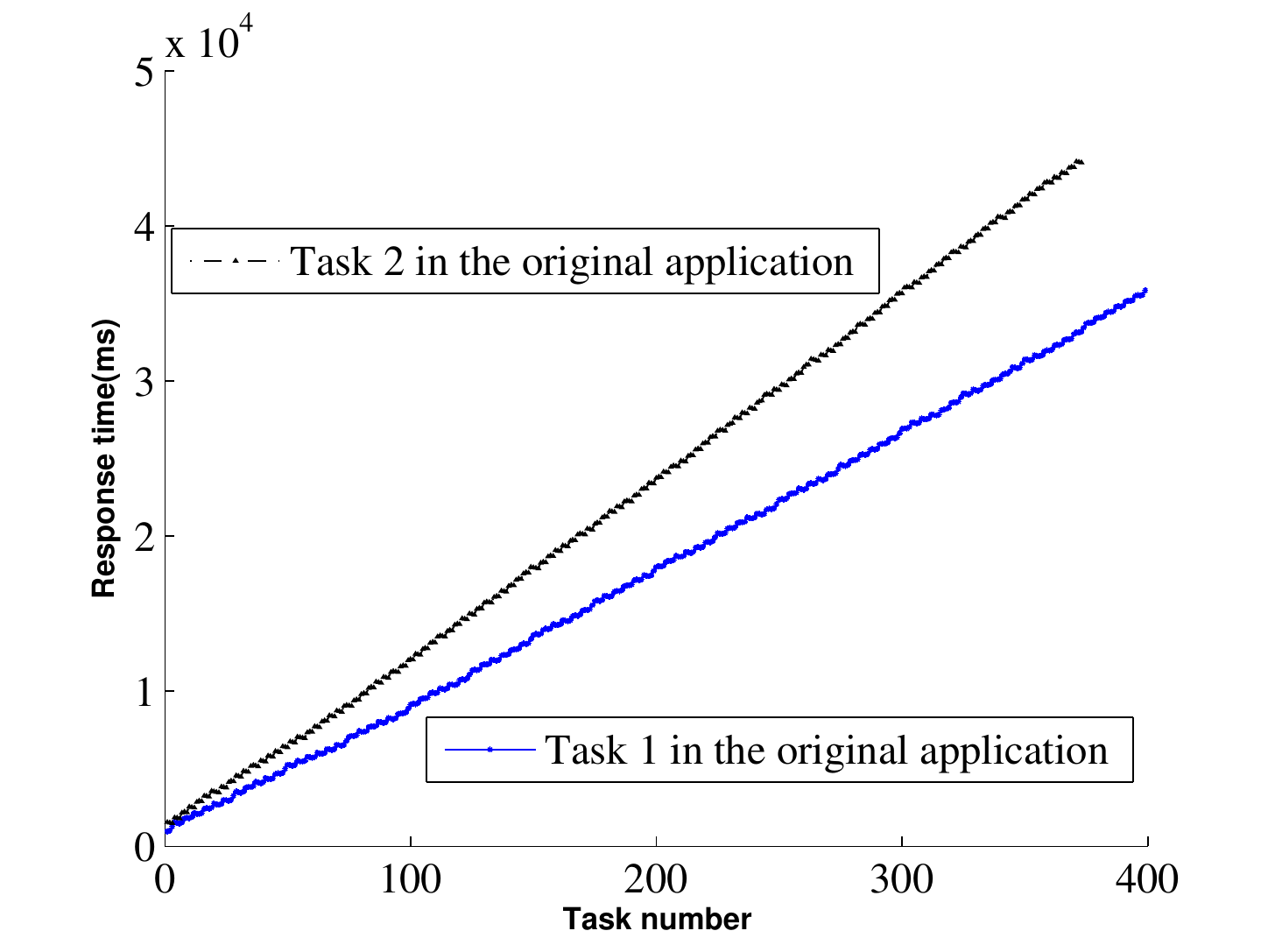}} \hspace{0mm}
  \subfloat[Case 1: Tasks in our application]{\label{fig:12}\includegraphics[width=0.32\textwidth]{./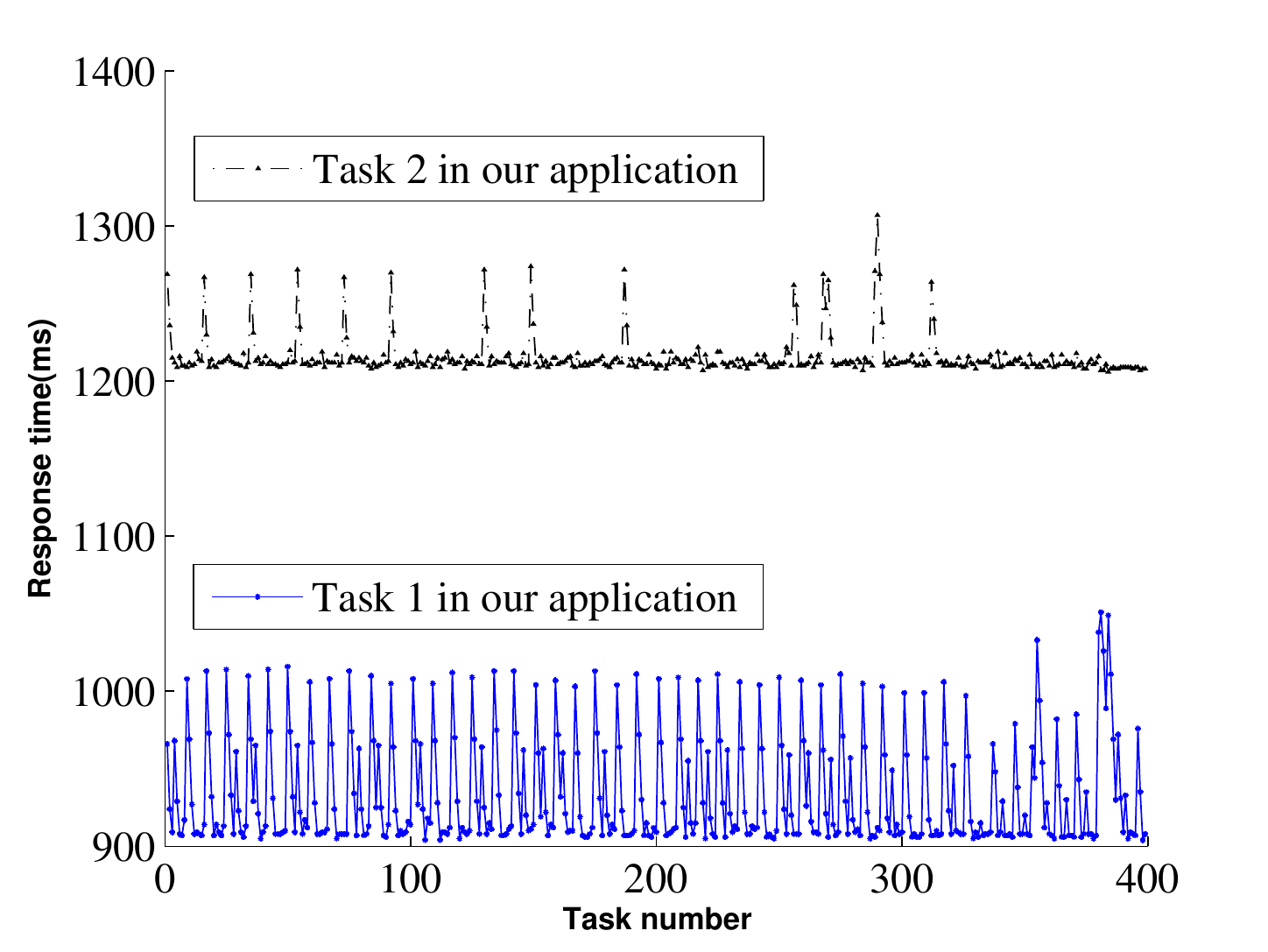}} \hspace{0mm}
  \subfloat[Case 2: Response time of $\tau_3$ and $\overline{\tau}_3$]{\label{fig:13}\includegraphics[width=0.32\textwidth]{./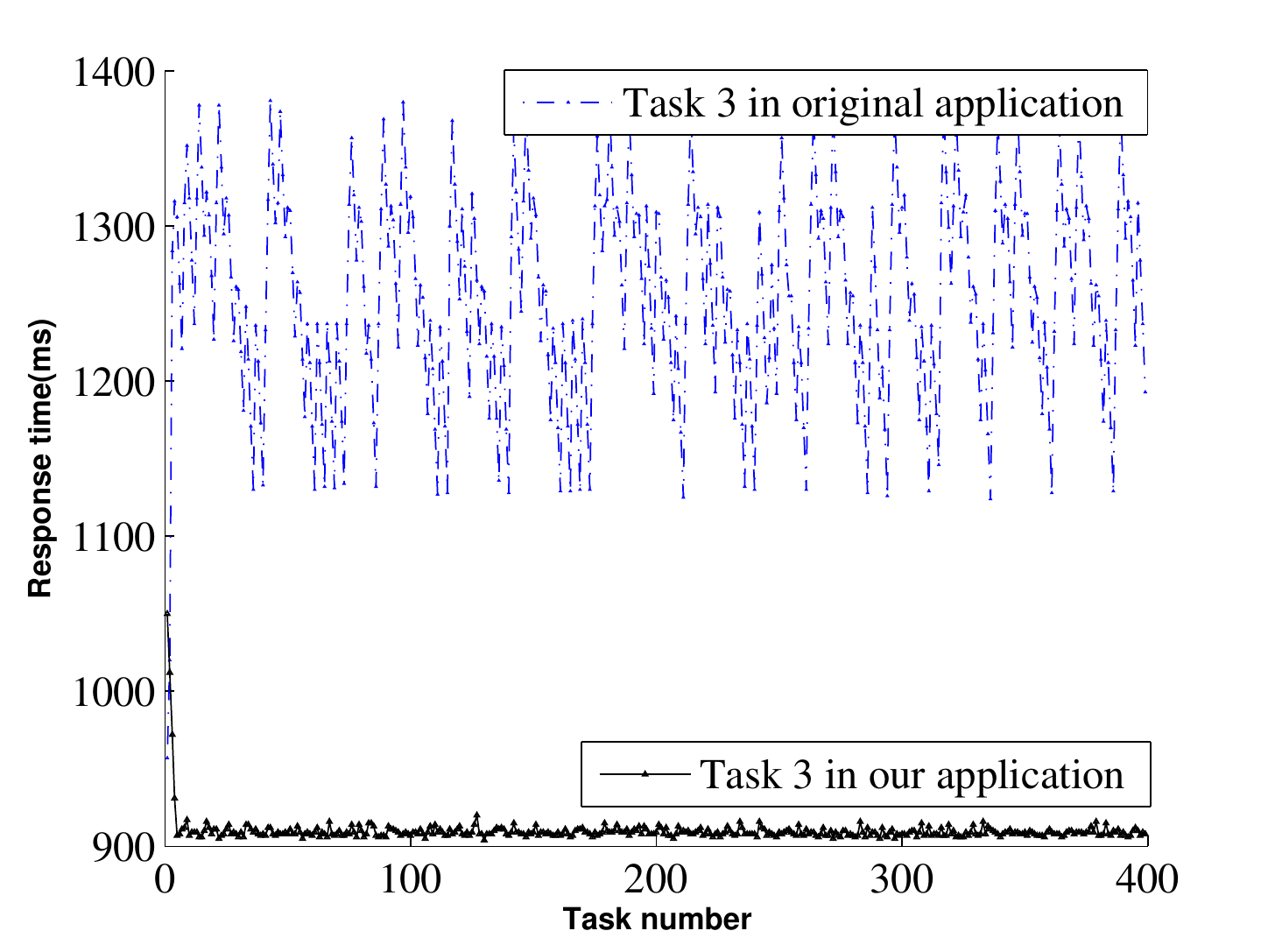}} \hspace{0mm} 
  \caption{Response time of each case} \vspace{-2mm} \normalsize
  \label{fig:case}
\end{figure*}
In case (2), the experiment shows that GEDF-R/W can significantly reduce response times in the multiprocessor case. For instance, in Fig. \ref{fig:case}(c), the response time of $\tau_3$ under GEDF varies from $1150 ms$ to $1550 ms$. While the response time of $\overline{\tau}_3$ under GEDF-R/W is merely around $920 ms$ with very slight variance.

\section{Experimental evaluation}
\label{sec:woexp}
In this section we evaluate our schedulability test stated in Theorem \ref{theorem:wotest} by experiments. Our goal is to examine how restrictive the derived schedulability test is and to compare it with suspension-oblivious density test shown in Theorem \ref{theorem:dtest}. 

\begin{figure*}[!htb]
  \centering
  
  \includegraphics[width=3.2in,height=2cm]{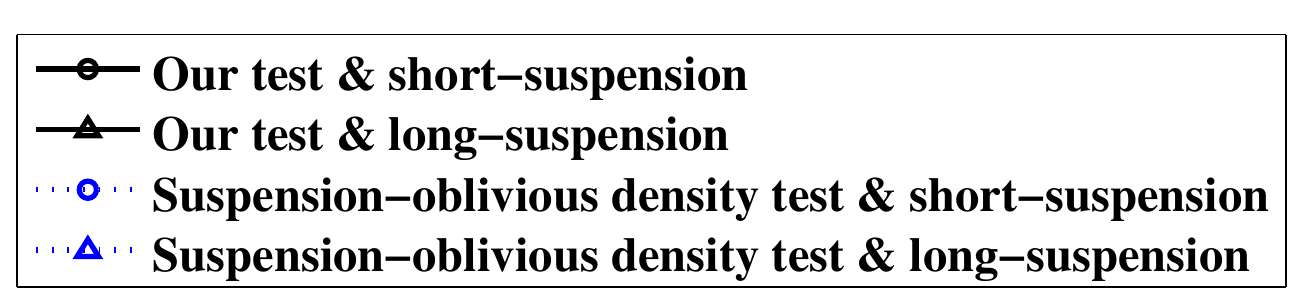} 
  
  \subfloat[Light utilization with $\alpha_i=0.9$]{\label{fig:1}\includegraphics[width=0.32\textwidth,height=3cm]{./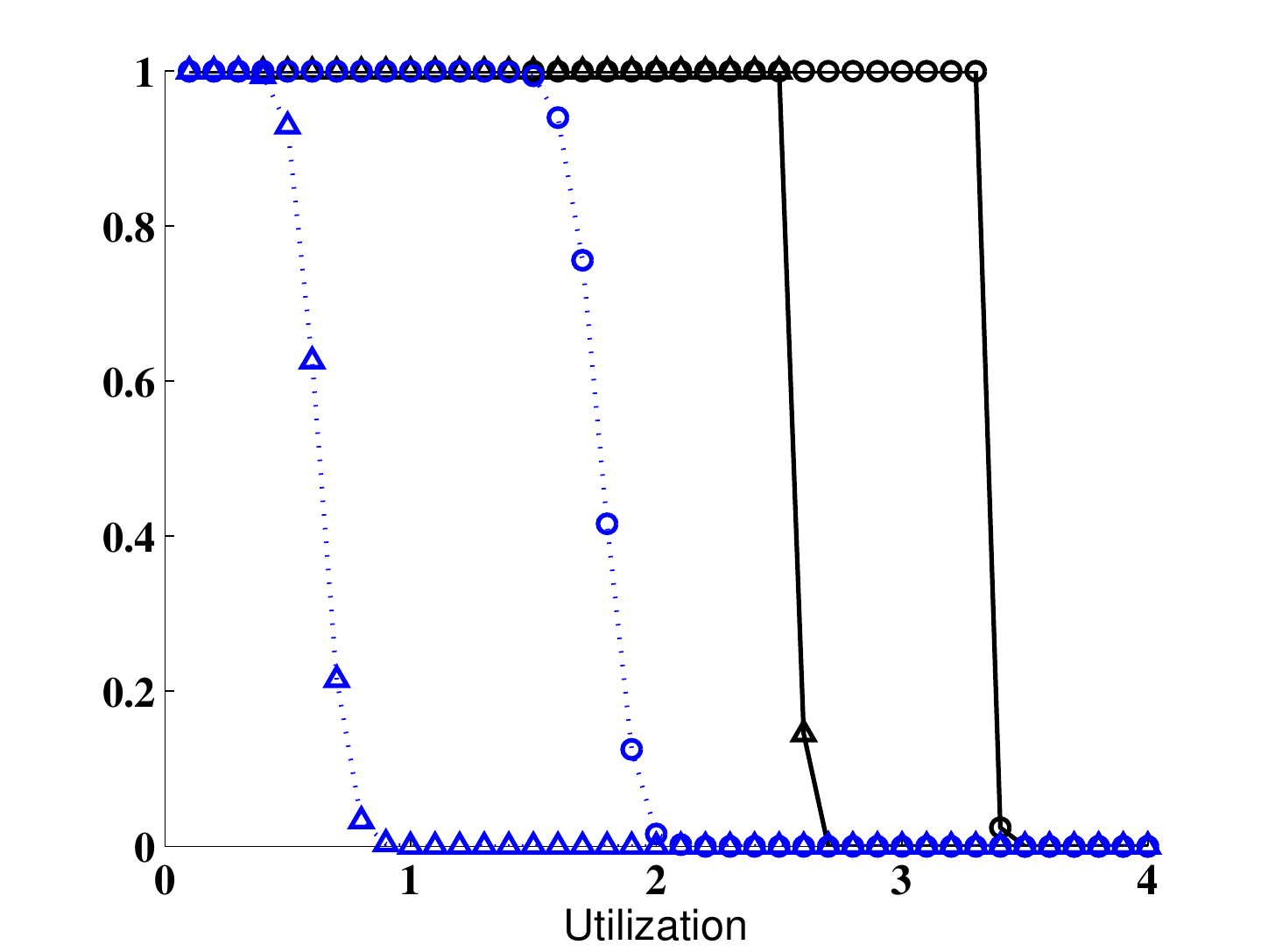}} \hspace{0mm}
  \subfloat[Medium utilization with $\alpha_i=0.9$]{\label{fig:2}\includegraphics[width=0.32\textwidth,height=3cm]{./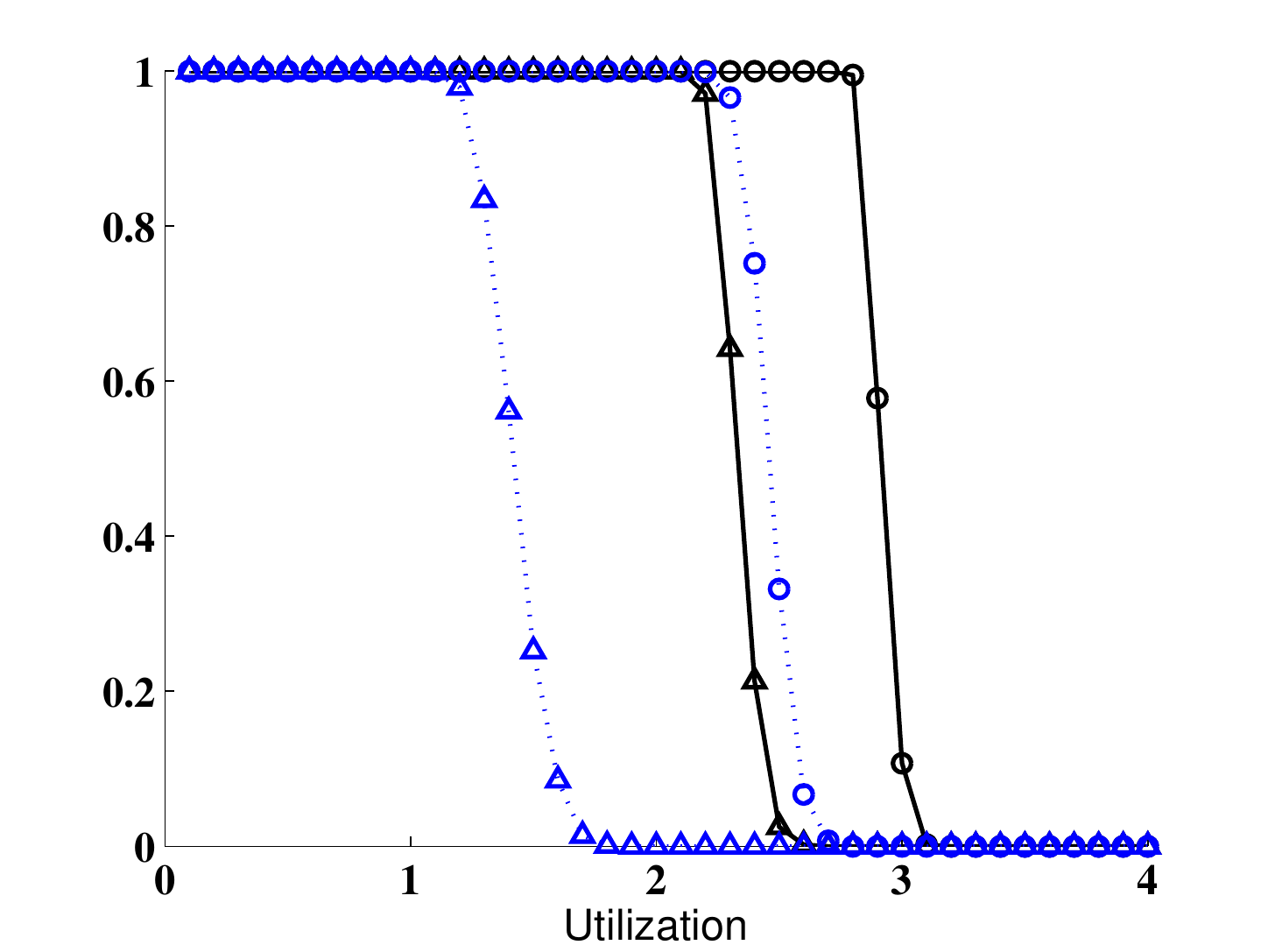}} \hspace{0mm}
  \subfloat[Heavy utilization with $\alpha_i=0.9$]{\label{fig:3}\includegraphics[width=0.32\textwidth,height=3cm]{./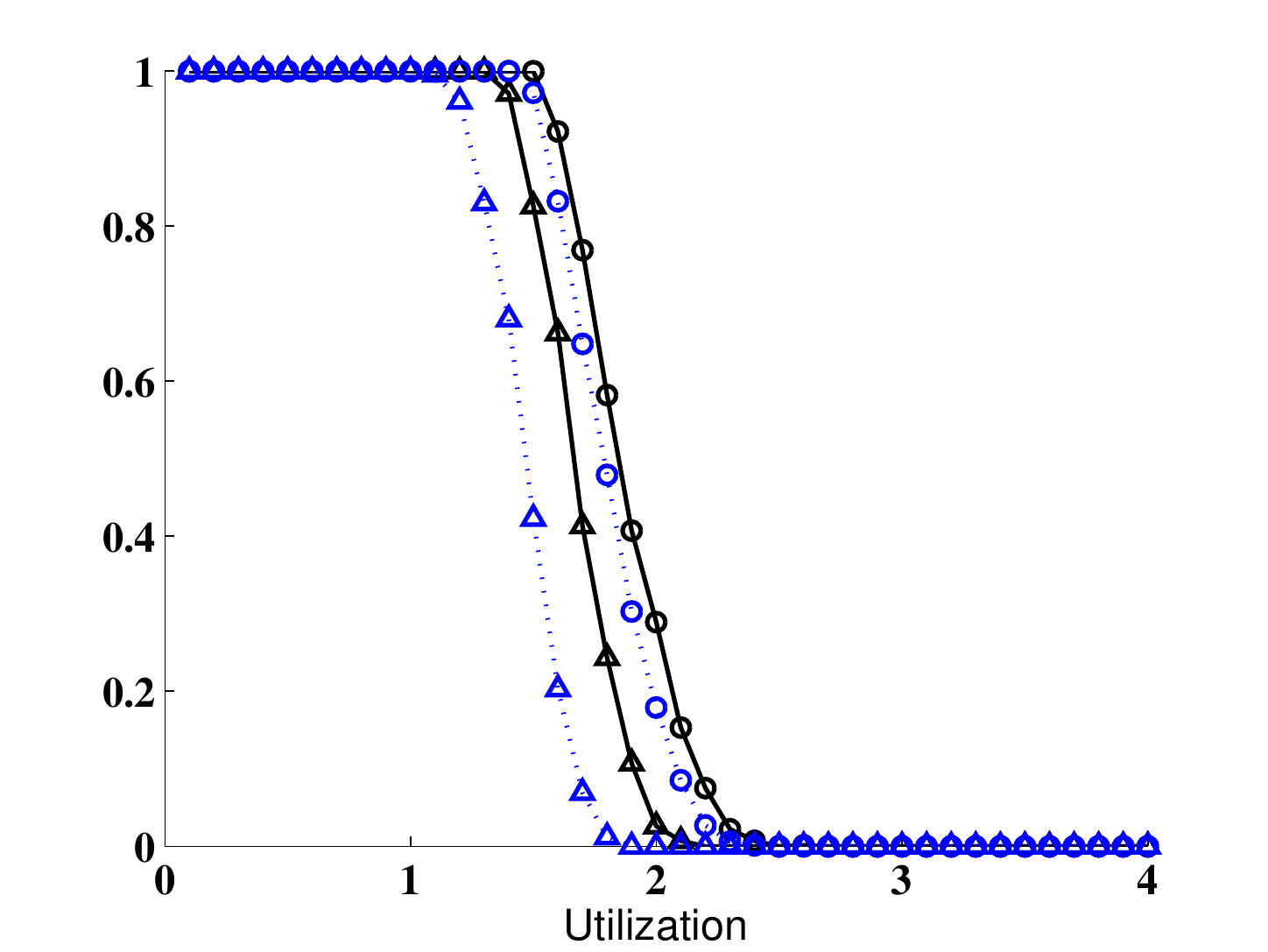}} \hspace{0mm}
  
  \subfloat[Light utilization with $\alpha_i=0.5$]{\label{fig:4}\includegraphics[width=0.32\textwidth,height=3cm]{./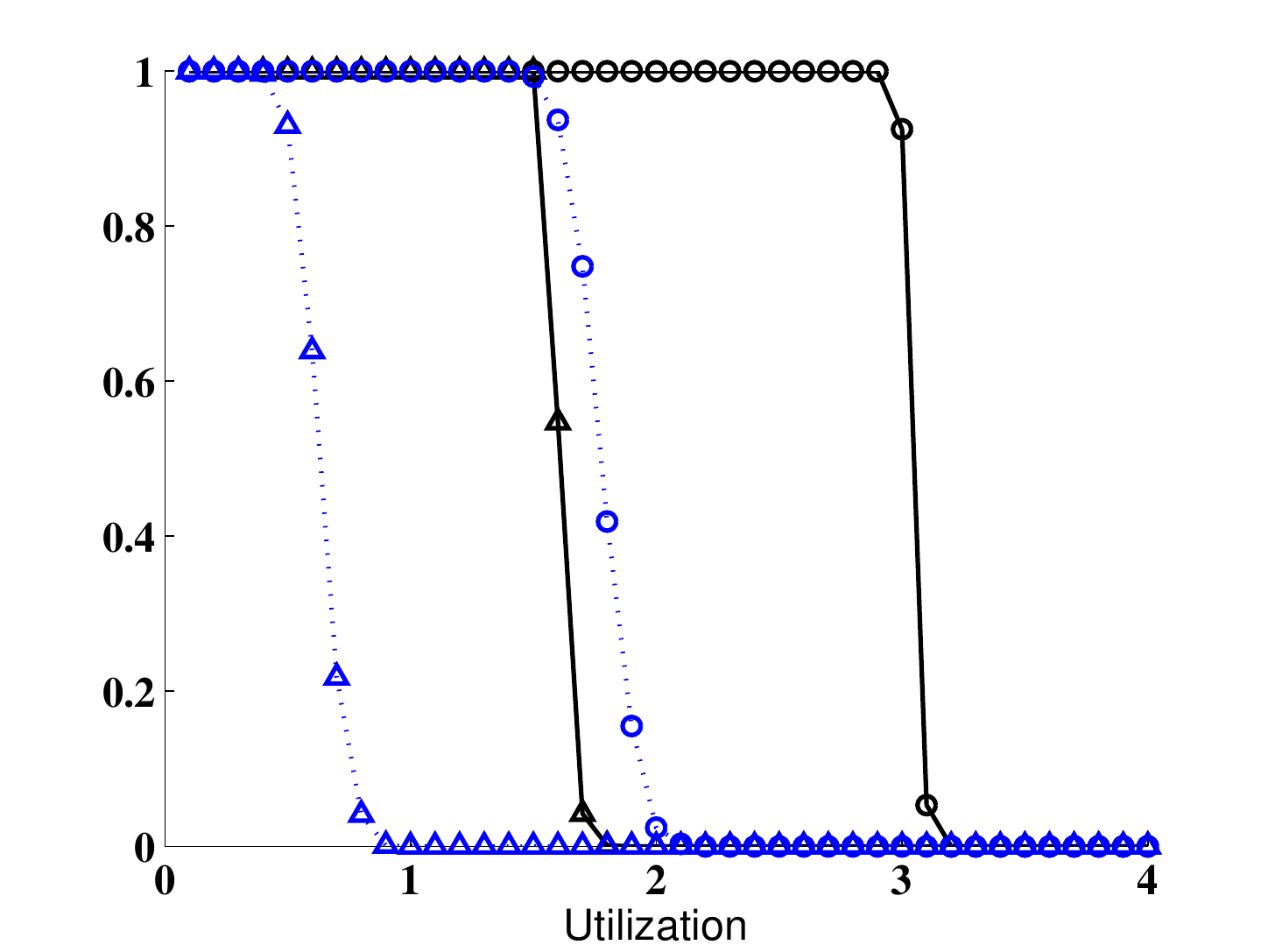}} \hspace{0mm}
  \subfloat[Medium utilization with $\alpha_i=0.5$]{\label{fig:5}\includegraphics[width=0.32\textwidth,height=3cm]{./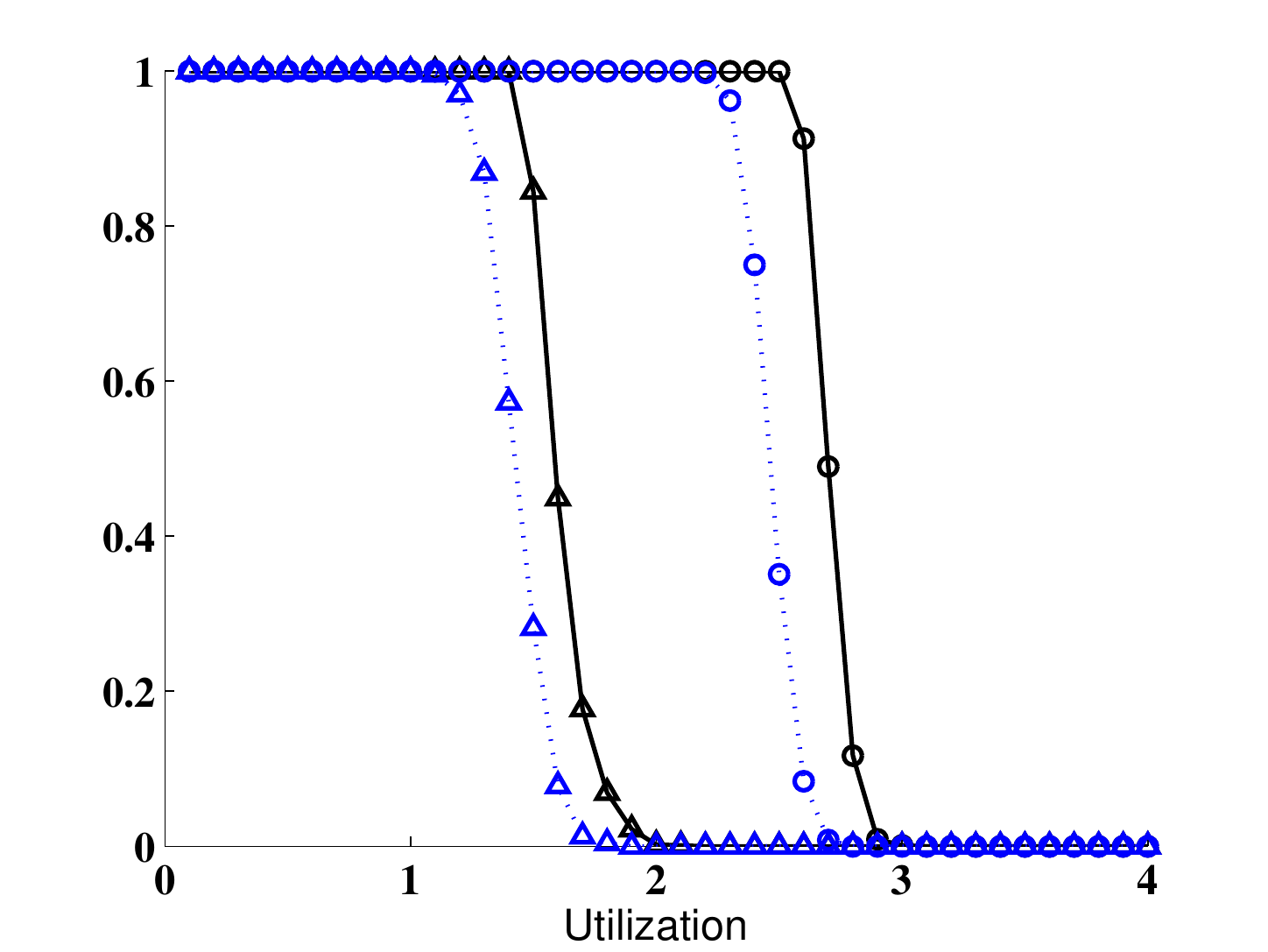}} \hspace{0mm}
  \subfloat[Heavy utilization with $\alpha_i=0.5$]{\label{fig:6}\includegraphics[width=0.32\textwidth,height=3cm]{./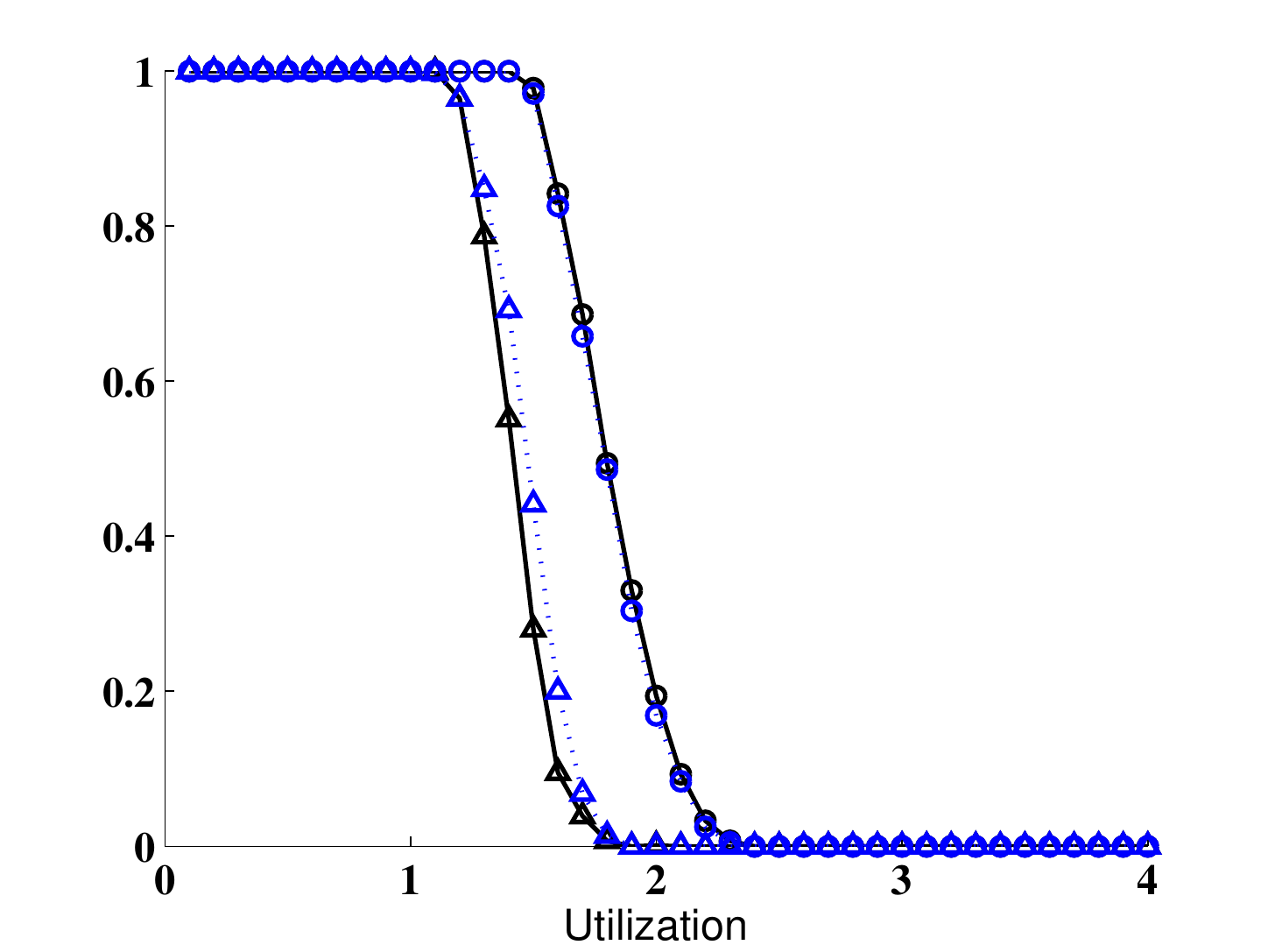}} \hspace{0mm}
  
  \subfloat[Light utilization with $\alpha_i=0.2$]{\label{fig:7}\includegraphics[width=0.32\textwidth,height=3cm]{./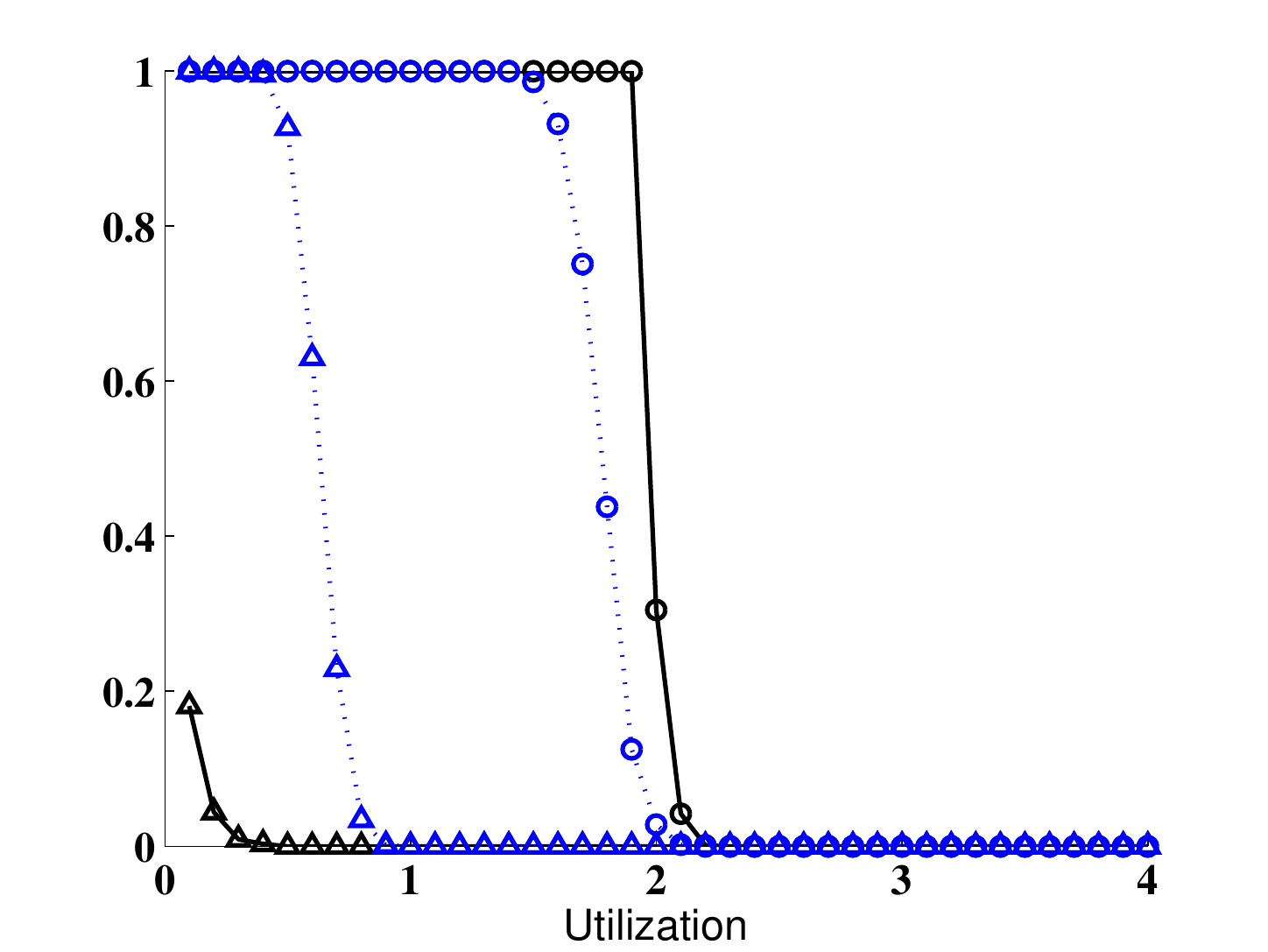}} \hspace{0mm}
  \subfloat[Medium utilization with $\alpha_i=0.2$]{\label{fig:8}\includegraphics[width=0.32\textwidth,height=3cm]{./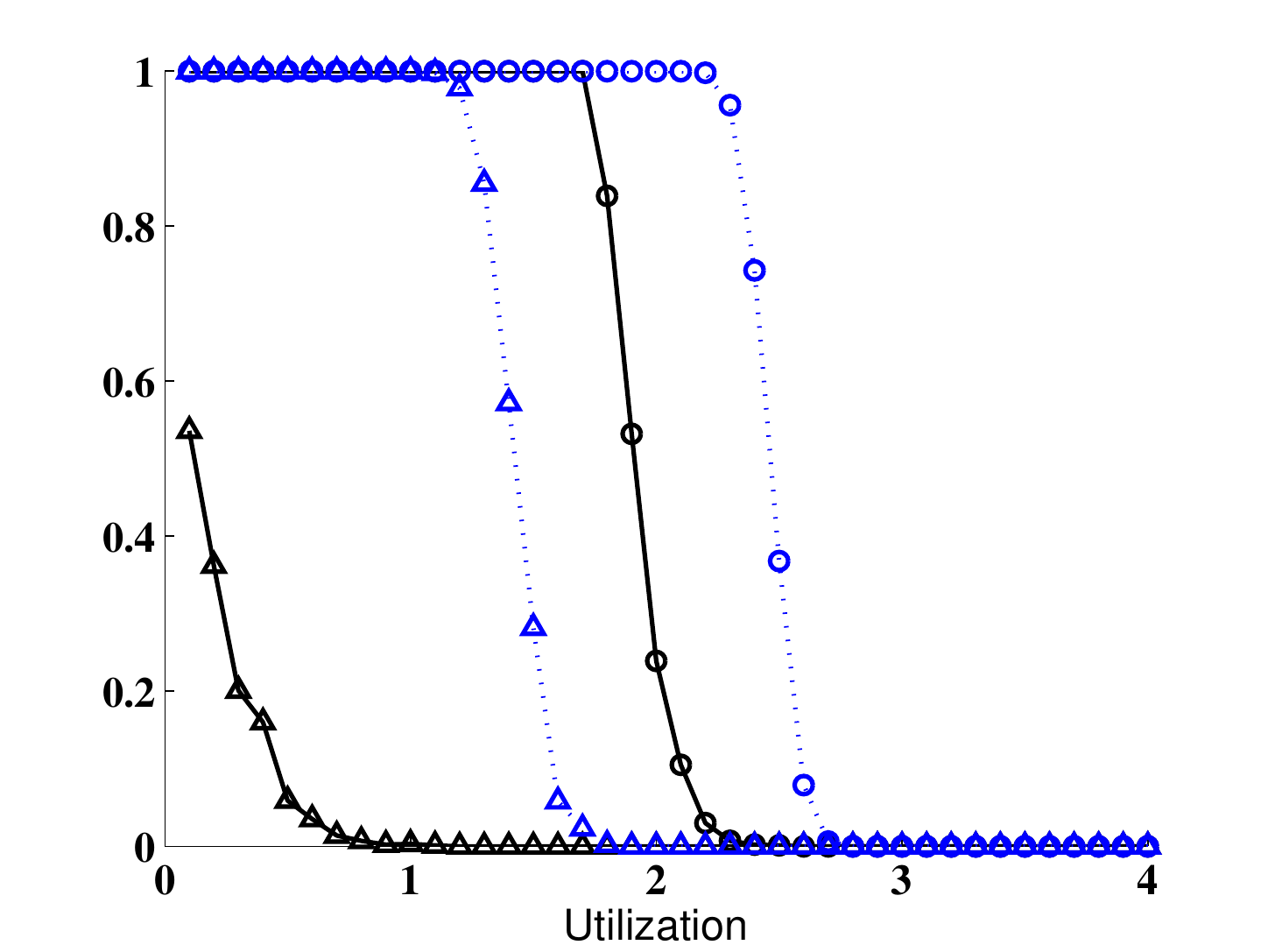}} \hspace{0mm}
  \subfloat[Heavy utilization with $\alpha_i=0.2$]{\label{fig:9}\includegraphics[width=0.32\textwidth,height=3cm]{./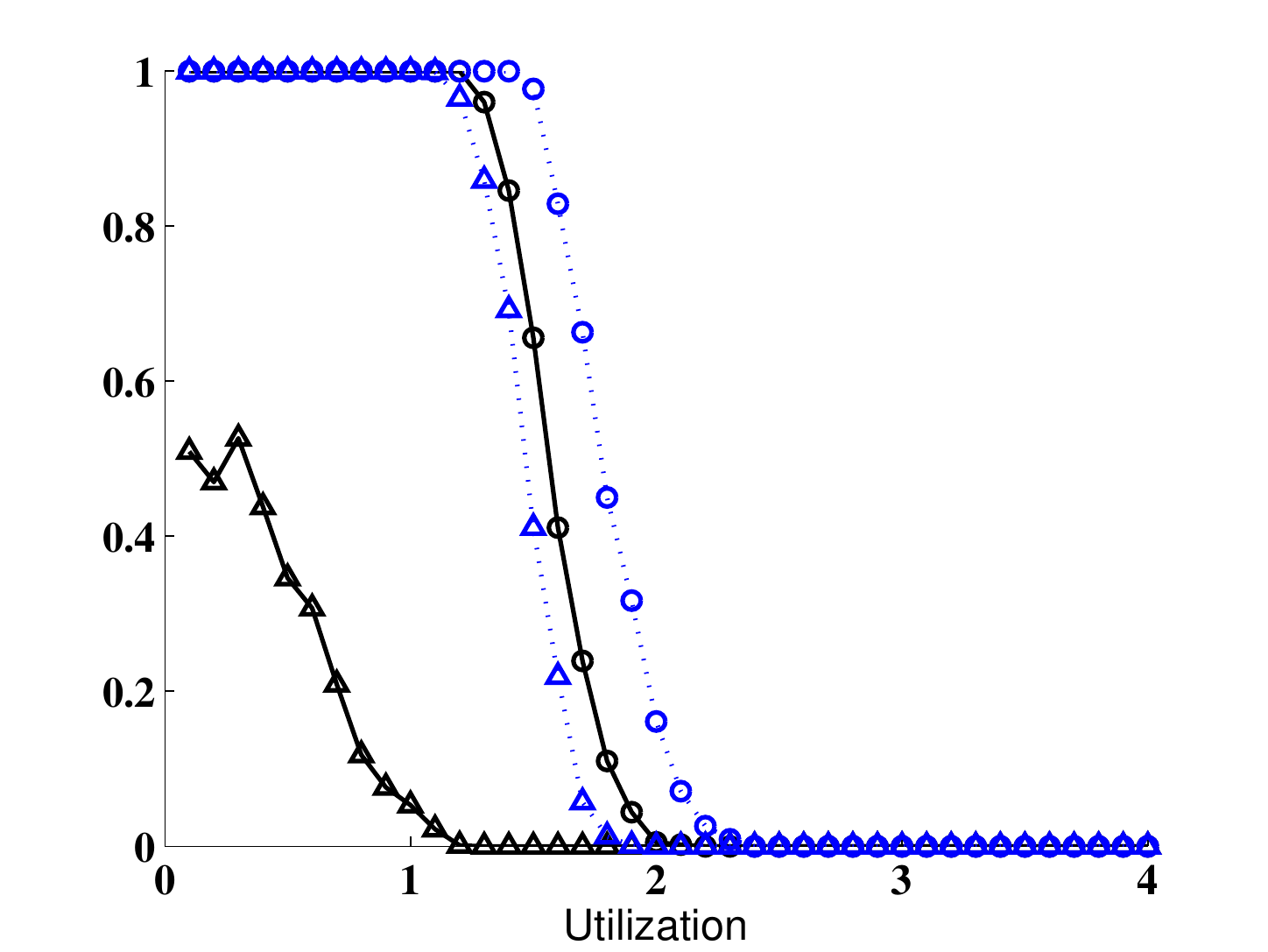}} \hspace{0mm}

  \caption{Schedulability result. In all nine graphs, the $y$-axis denotes the fraction of generate task systems that were schedulable under two schedulability tests in different conditions and $x$-axis denotes the utilization cap. In the first (respectively, second, third) row of graphs, the the value of $\alpha$ is 0.9 (respectively, 0.5, 0.2). In the first (respectively, second and third) column of graphs, light (respectively, medium and heavy) per-task   utilizations are assumed. Each graph gives two curves per tested approach for the cases of short and long suspensions, respectively.} 
  \vspace{1mm} \normalsize
  \label{fig:exp}
\end{figure*}
\paragraph*{\textbf{Experimental setup}}Consider that for magnetic disks, the suspension delay incurred is roughly 10$\mu s$ to 45$\mu s$\cite{kang:disktime}. Thus in our experiment, the lengths of writing phases $S_i$ were uniformly distributed over $[5\mu s,50\mu s]$.  Task utilizations $U_i$ were generated using three uniform
distributions: [0.001, 0.05](light), [0.05, 0.1](medium) and [0.1, 0.3](heavy). The suspension ratio $V_i$ of each task were generated by following two uniform distributions: $[0.005, 0.1]$(short) and $[0.1 , 0.3]$(long). And for each task $\tau_i$, we set the ratio of the first computing phase to its total computation length $\alpha_i$ ($\alpha_i=\frac{C_{i,1}}{C_i}$) as $0.9$, $0.5$ and $0.2$, respectively. Therefore, $T_i$ can be calculated by $S_i$ and $V_i$, and $C_i$ can be calculated by $T_i$ and $U_i$. The length of $C_{i,1}$ and $C_{i,2}$ can be calculated by $C_i$ and $\alpha_i$,  and $\delta_i$ can be calculated by $W_i$ and $C_{i,1}$. Note that when $\alpha_i$ grows larger, $\delta_i$ becomes smaller.

For each task system and a given utilization cap, we generated tasks to each task system until the total utilization exceeds the utilization cap and then we reduced the utilization of the last task to make the total utilization equal to utilization cap. For every utilization cap we generated 1000 task systems to evaluate the proposed schedulability test. We conducted the experiment for $m=4$.
\paragraph*{\textbf{Results}}The experimental results are shown in Fig. \ref{fig:exp} and the detailed explanation is available in the caption of Fig. \ref{fig:exp}. 

From Fig. \ref{fig:exp} we can see that in most cases our schedulability test is superior to suspension-oblivious density test, especially when $\alpha_i= 0.9$ and $\alpha_i= 0.5$. In Fig. \ref{fig:exp} (a), (b) and (c), regardless of the suspension length, our test significantly improves upon the suspension-oblivious density test in reducing the utilization loss. For example, in Fig. \ref{fig:exp}(a), when suspension is short, all task systems with $U_{sum}< 3.5$ are schedulable under our test; while the total utilization of task systems that are schedulable under suspension-oblivious density test is at most $1.9$. 

However, when the utilization is heavy and $\alpha_i$ is large, the suspension density-test becomes better than our schedulability test, as shown in Figs. \ref{fig:exp} (h) and (i). Moreover, with the increase of per-task utilization, both tests performed worst due to the large values of $Z_{max}$ in Theorem \ref{theorem:dtest} and  $U_{i}$ in Theorem \ref{theorem:wotest}. This is because with the decrease of $\alpha_i$, $\delta_i$ becomes larger. In such cases fewer task systems can pass our test due to the increase of the term $L$ defined in Lemma \ref{lem:proof1}. 

To conclude, in most cases our schedulability test is superior to the suspension-oblivious density test, often by a substantial margin. However, as discussed in Sec. \ref{subsec:rwanalysis}, these two tests do not dominate each other and in some extreme cases suspension-oblivious density test can be a better choice.
 
\section{Conclusion}
\label{sec:con}
In this paper, we have considered the problem of supporting applications with read/write operations in embedded real-time systems. First, we have shown that write-only task systems can be supported under GEDF on a multiprocessor with $O(m)$ suspension-related utilization loss. As demonstrated by experiments presented herein, in most cases our schedulability test is prior to the previous test. Second, in order to support read-write applications, we design a flexible I/O placement and a corresponding scheduling algorithm which enable us to completely eliminate the negative impact due to read- and write-induced suspensions. The presented case study shows our I/O placement is able to significantly reduce the response time. The presented case study implemented in real systems suggest that our proposed I/O-placement-based GEDF-R/W scheduling is feasible in practice. 

In this paper, we assume that the resource of I/O devices is sufficient to support as many tasks as we need, which is not true in practice. To handle I/O contention, one possible way is to integrate such contention into the worst-case suspension length parameter. However, this is very pessimistic. Thus, for future work, we plan to consider the co-scheduling problem on multiple resources. We also plan to design better algorithms that can reduce the job migration cost.

\bibliographystyle{abbrv}
\bibliography{sigproc}

\begin{thebibliography}{10}

\bibitem{devi2006soft}
U.~C. Devi.
\newblock {\em Soft real-time scheduling on multiprocessors}.
\newblock PhD thesis, University of North Carolina, 2006.

\bibitem{ridouard:unihard}
P.~R. F.~Ridouard and F.~Cottet.
\newblock Negative results for scheduling independent hard real-time tasks with
  self-suspensions.
\newblock In {\em In Proc.of the 25th RTSS}, pages 47--56, 2004.

\bibitem{goossens2003priority}
J.~Goossens, S.~Funk, and S.~Baruah.
\newblock Priority-driven scheduling of periodic task systems on
  multiprocessors.
\newblock {\em Real-time systems}, 25(2-3):187--205, 2003.

\bibitem{lakshmanan2010scheduling}
K.~Lakshmanan and R.~Rajkumar.
\newblock Scheduling self-suspending real-time tasks with rate-monotonic
  priorities.
\newblock In {\em Real-Time and Embedded Technology and Applications Symposium
  (RTAS), 2010 16th IEEE}, pages 3--12. IEEE, 2010.

\bibitem{cong:rtss09}
C.~Liu and J.~Anderson.
\newblock Task scheduling with self-suspensions in soft real-time
  multiprocessor systems.
\newblock In {\em In Proc. of the 30th RTSS}, pages 425--436, 2009.

\bibitem{cong:O(m)}
C.~Liu and J.~Anderson.
\newblock An o(m) analysis technique for supporting real-time self-suspending
  task systems.
\newblock In {\em In Proc. of the 33rd RTSS}, pages 373--382, 2012.

\bibitem{liu2010improving}
C.~Liu and J.~H. Anderson.
\newblock Improving the schedulability of sporadic self-suspending soft
  real-time multiprocessor task systems.
\newblock In {\em RTCSA}, pages 13--22, 2010.

\bibitem{cong:hrt_sus}
C.~Liu and J.~H. Anderson.
\newblock Suspension-aware analysis for hard real-time multiprocessor
  scheduling.
\newblock In {\em Real-Time Systems (ECRTS), 2013 25th Euromicro Conference},
  pages 271--281. IEEE, 2013.

\bibitem{palencia2005response}
J.~Palencia and M.~G. Harbour.
\newblock Response time analysis of edf distributed real-time systems.
\newblock {\em Journal of Embedded Computing}, 1(2):225--237, 2005.

\bibitem{palencia1998schedulability}
J.~C. Palencia and M.~Gonzalez~Harbour.
\newblock Schedulability analysis for tasks with static and dynamic offsets.
\newblock In {\em Real-Time Systems Symposium, 1998. Proceedings., The 19th
  IEEE}, pages 26--37. IEEE, 1998.

\bibitem{kang:disktime}
J.~S. W.~Kang, S.~Son and M.~Amirijoo.
\newblock I/o-aware deadline miss ratio management in real-time embedded
  databases.
\newblock In {\em In Proc. of the 28th RTSS}, pages 277--287, 2007.

\end{thebibliography}

\appendix
\section{The proof of Lemma 5}
\begin{proof}
To prove this lemma, we try to find a necessary condition for deadline miss by comparing the GEDF-R/W schedule to the PS schedule. Thus, we prove by contradiction. Assume job $\tau_{i,j}$ is the first job that misses its deadline at $d_{i,j}$. If more than one jobs misses deadline at $d_{i,j}$, we choose the one with the highest priority. Then we get rid of the jobs with priorities lower than that of $\tau_{i,j}$.

According to Lemma \ref{lem:pre_analysis}, $lag(\tau_{i,j}, d_{i,j}, S)> 0 $. And because $\tau_{i,j}$ is the first job misses its deadline, $lag(\tau_i, d_{i,j}, S)> 0 $ and for every $k\neq i$, $lag(\tau_k, d_{i,j}, S) = 0$. Thus, by Eq. (\ref{eq:LAG for task set_1}), $LAG(\tau, d_{i,j}, S) > 0$.

From time instant $0$, let $t^{*}$ be the earliest time instant that
\begin{equation}
\label{eq:as_lag2_2}
LAG(\tau, t^{*}, S)> 0.                   
\end{equation}
Since $LAG(\tau, 0,S) = 0$, $d_{i,j} \geq t^{*} > 0$. By the definition of $LAG(\tau, t^{*}, S)$, there exists a task $\tau_k$ at $t$ that $lag(\tau_k, t^{*}, S)>0$. Let $\tau_{k,l}$ be the pending job of $\tau_k$ at $t^{*}$. Because jobs in $\tau_k$ released before $\tau_{k,l}$ have finished all their computation and suspension, we have
\begin{eqnarray}
\label{eq:as_lag3_2}
lag(\tau_k, t^{*}, S)  = lag(\tau_{k,l}, t^{*}, S) > 0. 
\end{eqnarray}

There are three kinds of unit intervals to be considered in $[r_{k,l}, t^{*})$: (1)$\tau_{k,l}$ is not comp-preempted and $\tau_{k,l}$ suspends in it;(2)$\tau_{k,l}$ is not comp-preempted and $\tau_{k,l}$ computes in it;(3)$\tau_{k,l}$ is comp-preempted in it. Let $\beta_1, \beta_2, $and $\beta_3$ be the sets of each kind of unit intervals, respectively. Thus $\beta_1\cup \beta_2 \cup \beta_3= [r_{k,l}, t)$ and they are pairwise disjoint. Let $B_1$, $B_2$ and $B_3$ be the length of each set, respectively. Note that unit intervals in $\beta_3$ must be busy and $\tau_{k,l}$ may suspend in intervals in $\beta_3$.

\paragraph*{\textbf{Case A}}First we consider the case when $B_1 >0$. If $B_1 >0 $ , according to GEDF-R/W, $\tau_{k,l}$ must have finished all its computation by $t^{*}$. Thus,
\begin{eqnarray*}
&& lag(\tau_{k,l}, t^{*}, S) \\
&=& A(\tau_{k,l}, r_{k,l}, t^{*}, PS)- A(\tau_{k,l}, r_{k,l}, t^{*}, S) \\
&=& U_k \cdot B_2 - C_k \\
&\leq& U_k \cdot T_k - C_k \\
&=& 0
\end{eqnarray*} 
which violates Eq. (\ref{eq:as_lag3_2}).
\paragraph*{\textbf{Case B}}Then, we discuss the cases when $B_1= 0$. \ref{fig:lemma5}, which means that $\tau_{k,l}$ remains comp-pending during $[r_{k,l}, t)$. 

\textbf{case B.1: $B_1= 0,B_2 = B_3= 0$.} In this case, $t^{*}= r_{k,l}$ and $lag(\tau_{k,l}, t^{*}, S)= 0$ which violates Eq. (\ref{eq:as_lag3_2}). 

\textbf{case B.2: $B_1= 0, B_2 > 0, B_3= 0$.} In this case,$[r_{k,l}, t^{*})$ is a busy interval for $\tau_{k,l}$. So $lag(\tau_{k,l}, t^{*}, S) < 0$, which violates Eq. (\ref{eq:as_lag3_2}). 

\textbf{case B.3:} $B_1= 0 , B_2 \geq 0, B_3 > 0$ as shown in Fig. \ref{fig:lemma5}. In this case, first we consider $lag(\tau_{k,l}, t^{*}, S)$. By the definition of $A(\tau_{k,l}, t_1, t_2, PS)$ and $A(\tau_{k,l}, t_1, t_2, S)$,
\begin{eqnarray}
\label{eq:a_c4_1_2}
A(\tau_{k,l}, r_{k,l}, t^{*}, PS) = (B_2+B_3) \cdot u_k  
\end{eqnarray}
and $A(\tau_{k,l}, r_{k,l}, t^{*}, S)= B_2$ . By Eq. (\ref{eq:as_lag3_2}),
\begin{eqnarray*}
\label{eq:a_c4_2_2}
0 &<& lag(\tau_{k,l}, t^{*}, S)  \nonumber\\
&=& A(\tau_{k,l}, r_{k,l}, t^{*}, PS) - A(\tau_{k,l},  r_{k,l}, t^{*}, S)  \nonumber\\
&\leq& (B_2+B_3) \cdot U_k - B_2   \nonumber \\ 
&=& B_2 \cdot (U_k -1) + B_3 \cdot U_k 
\end{eqnarray*} 
which implies
\begin{equation}
\label{eq:a_c4_3_2}
B_2 < \frac{B_3 \cdot U_k}{1- U_k }.
\end{equation}
Now we consider $LAG(\tau, t^{*}, S)$. Because $t^{*}$ is the earliest time instant that $LAG(\tau, t^{*}, S)> 0$ and $r_{k,l} < t^{*}$, we have $LAG(\tau, r_{k,l}, S) \leq 0$. By the definition of LAG, $LAG(\tau, t^{*}, S) = LAG(\tau, r_{k,l}, S)+ A(\tau, r_{k,l}, t^{*}, PS) - A(\tau, r_{k,l}, t^{*}, S)$. Thus, 
\begin{eqnarray}
\label{eq:a_c4_4_2}
LAG(\tau, t, S) \leq A(\tau, r_{k,l}, t^{*}, PS) - A(\tau, r_{k,l}, t^{*}, S).
\end{eqnarray}
Also we have
\begin{eqnarray}
\label{eq:a_c4_5_2}
A(\tau, r_{k,l}, t^{*}, PS) = U_{sum} \cdot (B_2+B_3)
\end{eqnarray} 
and
\begin{eqnarray}
\label{eq:a_c4_6_2}
A(\tau, r_{k,l}, t^{*}, S)  \geq m \cdot B_3+ B_2.
\end{eqnarray}
Therefore, by Eq. (\ref{eq:as_lag2_2}),
\begin{eqnarray*}
\label{eq:a_c4_7_2}
0 &<& LAG(\tau, t^{*}, S) \nonumber\\
&&{\lbrace \rm{by}~(\ref{eq:a_c4_4_2}) \rbrace} \nonumber\\ 
&\leq& A(\tau, r_{k,l}, t^{*}, PS) - A(\tau, r_{k,l}, t^{*}, S)  \nonumber\\
&&{\lbrace \rm{by}~(\ref{eq:a_c4_5_2}) \rbrace} \nonumber\\ 
&=& U_{sum} \cdot (B_2+B_3)- A(\tau, r_{k,l}, t^{*}, S) \nonumber \\ 
&&{\lbrace \rm{by}~(\ref{eq:a_c4_6_2}) \rbrace} \nonumber\\
&\leq& U_{sum} \cdot (B_2+B_3)- (m \cdot B_3+ B_2) \nonumber \\  
&=& (U_{sum}-1)\cdot B_2+ (U_{sum}-m) \cdot B_3\nonumber \\
&&{\lbrace \rm{by}~(\ref{eq:a_c4_3_2}) \rbrace} \nonumber\\ 
&<& \frac{(U_{sum}-1) \cdot B_3 \cdot U_k}{1- U_k }  + (U_{sum}-m) \cdot B_3
\end{eqnarray*}
which implies,
\begin{eqnarray*}
\label{eq:a_c4_8_2}
&&U_{sum} > m-(m-1)\cdot U_k > m-(m-1)\cdot U_{max}
\end{eqnarray*}
However, this violates Eq. (\ref{eq:as_sum_2})

From the above, we have discussed all the possible cases and each case implies a contradiction. Therefore, Lemma \ref{proof2} follows. 
\end{proof}

\begin{figure}[th]
	\begin{center}
	\includegraphics[width=3.5in]{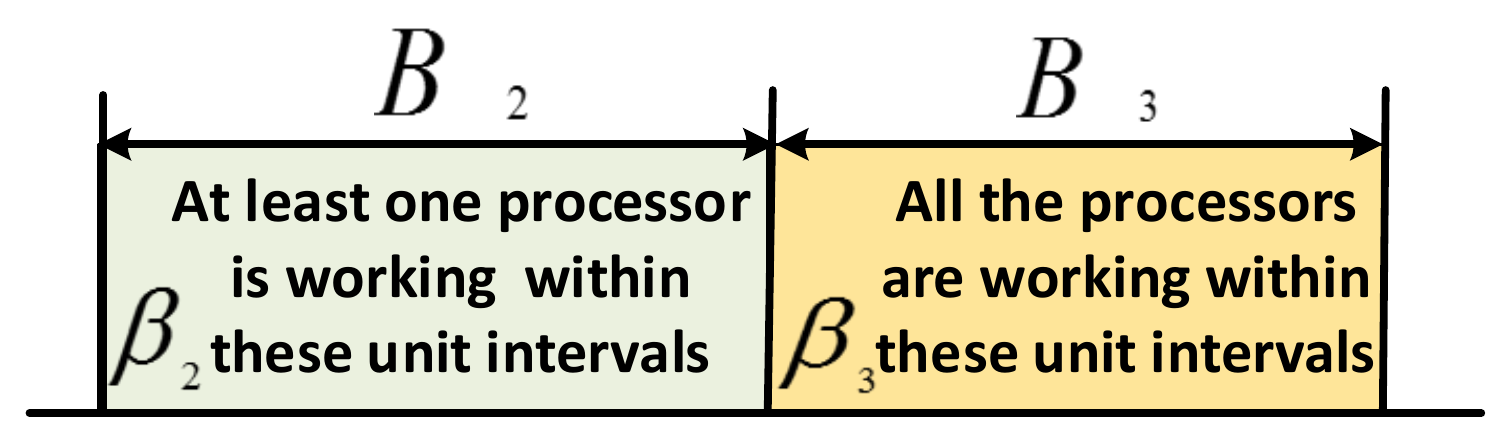} 
	\end{center} 
\vspace{-5mm}
\caption{Two sets of unit intervals in $[r_{k,l}, t^{*})$}
\vspace{-3mm}
\label{fig:lemma5}
\end{figure}

\end{document}